\crefname{lstfloat}{listing}{listings}
\Crefname{lstfloat}{Listing}{Listings}
\crefname{section}{Sect.}{Sects.} 
\crefname{Section}{Section}{Sections}
\crefname{table}{Tab.}{Tabs.} 
\Crefname{table}{Table}{Tables}
\crefname{lstfloat}{List.}{Lists.} 
\Crefname{lstfloat}{Listing}{Listings}
\crefname{figure}{Fig.}{Figs.} 
\crefname{section}{Sect.}{Sects.} 
\crefname{appendix}{Appendix}{Appendices} 
\renewcommand{\SetProgSty}[1]{\renewcommand{\ProgSty}[1]{\textnormal{\csname#1\endcsname{##1}}\unskip}}
\lstdefinestyle{listing}{
	extendedchars=true,
	basicstyle=\fontsize{6pt}{6pt}\selectfont\ttfamily,
	showstringspaces=false,
	showspaces=false,
	numbers=left,
	numberstyle=\tiny,
	numbersep=0pt,
	tabsize=2,
	breaklines=true,
	showtabs=false,
	captionpos=b,
	lineskip = 2pt
}
\lstdefinestyle{table}{
	extendedchars=true,
	basicstyle=\fontsize{8pt}{8pt}\selectfont\ttfamily,
	showstringspaces=false,
	showspaces=false,
	numbersep=0pt,
	numbers=none,
	tabsize=2,
	breaklines=true,
	showtabs=false,
	captionpos=b,
	lineskip = 2pt
}
\definecolor{Orchid}{RGB}{150, 40, 120}
\crefname{lstlisting}{Listing}{Listings}
\Crefname{lstlisting}{Listing}{Listings}
\newacronym[\glslongpluralkey={Distributed Ledger Technologies}]{dlt}{DLT}{Distributed Ledger Technology}
\newacronym{ipfs}{IPFS}{InterPlanetary File System}
\newcommand{\IPFS}[0] {\Gls{ipfs}\xspace}
\newacronym{ipns}{IPNS}{InterPlanetary Name System}
\newcommand{\IPNS}[0] {\Gls{ipns}\xspace}
\newacronym{p2p}{P2P}{peer-to-peer}
\newcommand{\PtoP}[0] {\Gls{p2p}\xspace}
\newacronym{abe}{ABE}{Attribute-Based Encryption}
\newacronym{maabe}{MA-ABE}{Multi-Authority Attribute-Based Encryption}
\newacronym{cpabe}{CP-ABE}{Ciphertext-Policy Attribute-Based Encryption}
\newacronym{ssl}{SSL}{Secure Sockets Layer}
\newacronym{dht}{DHT}{Distributed Hash Table}
\newacronym{mpc}{MPC}{Multi-party Computation}
\newacronym{tee}{TEE}{Trusted Execution Environment}
\newcommand{\TEE}[0] {\Gls{tee}\xspace}
\newacronym{ecdh}{ECDH}{Elliptic-Curve Diffie-Hellman}
\newcommand{\ECDH}[0] {\Gls{ecdh}\xspace}
\newacronym{ecdsa}{ECDSA}{Elliptic Curve Digital Signature Algorithm}
\newcommand{\ECDSA}[0] {\Gls{ecdsa}\xspace}
\newacronym{sha}{SHA}{Secure Hash Algorithm}
\newcommand{\SHA}[0] {\Gls{sha}}
\newacronym{feel}{FEEL}{Friendly Enough Expression Language}
\newcommand{\FEEL}[0] {\Gls{feel}\xspace}
\newacronym{alfa}{ALFA}{Abbreviated Language for Authorization}
\newcommand{\ALFA}[0] {\Gls{alfa}\xspace}
\newacronym{cpu}{CPU}{Central Processing Unit}
\newacronym{aes-gcm}{AES-GCM}{Advanced Encryption Standard~-~Galois/Counter Mode}
\newcommand{\AESGCM}[0] {\Gls{aes-gcm}\xspace}
\newacronym{sgx}{SGX}{Software Guard Extensions}
\newcommand{\SGX}[0] {\Gls{sgx}\xspace}
\newacronym{tcp}{TCP}{Transmission Control Protocol}
\newacronym{tls}{TLS}{Transport Layer Security}
\newcommand{\TLS}[0] {\Gls{tls}\xspace}
\newacronym{dlib}{DeLib}{Decision Library}
\newcommand{\DLIB}[0] {\Gls{dlib}\xspace}
\newacronym{desobj}{DeSObj}{Decision Support Object}
\newcommand{\DESOBJ}[0] {\Gls{desobj}\xspace}
\newacronym{dectapp}{DeTApp}{Decision Trusted App}
\newcommand{\DECTAPP}[0] {\Gls{dectapp}\xspace}
\newacronym{ram}{RAM}{Random Access Memory}
\newglossaryentry{box}{ %
  name={Box},
  description={authority intialisation storage box},
  first={\glsentrydesc{box} (henceforth, \glsentrytext{Box} for short)},
  plural={Boxes},
  firstplural={\glsentrydesc{box}es (henceforth, \glsentryplural{box} for short)}
}
\newglossaryentry{rloc}{ %
	name={resource locator},
	description={a generic term to denote content-based links obtained via hashing (e.g., the IPFS link)},
}
\newacronym{cpmaabe}{CP-MA-ABE}{Ciphertext-Policy Multi-Authority Attribute-Based Encryption}
\newglossaryentry{dprovider}{ %
  name={Data Provider},
  description={provide data for decision making},
  plural={Data Providers},
}
\def\DOwner{\glsentrytitlecase{dprovider}{name}\xspace}
\def\Decider{Decider\xspace}
\def\DFile{Data File Processor\xspace}
\def\DNotary{Notarization System\xspace}
\newglossaryentry{po}{name={Policymaker}, description={defines decision rules, functions, and their callability}}
\newcommand{\PO}[0] {\Gls{po}\xspace}
\newacronym{ca}{CA}{Certification Authority}
\newcommand{\CA}[0] {\Gls{ca}\xspace}
\newacronym{ccu}{CCU}{Confidential Computing Unit}
\newcommand{\CCU}[0] {\Gls{ccu}\xspace}
\newacronym{sparta}{SPARTA}{Secure Platform for Automated decision Rules via Trusted Applications}
\newcommand{\SPARTA}[0] {\Gls{sparta}\xspace}
\newacronym{dk}{\textsl{dk}}{decryption key}
\newacronym{fdk}{\textsl{fdk}}{final decryption key}
\providecommand{\AdInt}{Admin Interface\xspace}
\providecommand{\UsInt}{User Interface\xspace}
\providecommand{\DataMan}{Data Manager\xspace}
\providecommand{\RequestMan}{Request Manager\xspace}
\newcommand{\Enc}{\mathsf{Enc}}
\newcommand{\Dec}{\mathsf{Dec}}
\newcommand{\Gen}{\mathsf{Gen}}
\newcommand{\keyspace}{\mathcal{K}}
\newcommand{\adv}{\mathcal{A}}
\newcommand{\NN}{\mathbb{N}}
\newcommand{\tVAX}{\hyperref[sec:running-example]{VAX}\xspace}
\newcommand{\tMED}{\href{https://dmcommunity.org/challenge/challenge-feb-2021/}{MED}\xspace}
\newcommand{\tCARD}{\href{https://dmcommunity.org/challenge/challenge-oct-2021/}{CARD}\xspace}
\newacronym{pCERT}{CERT}{Certification}
\newcommand{\pCERTdesc}{\gls{pCERT}\label{step:sysboot:certificationauthority}}
\newcommand{\pCERT}{\hyperref[step:sysboot:certificationauthority]{\gls{pCERT}}\xspace}
\newacronym{pCCUINIT}{CCU-INIT}{CCU initialization}
\newcommand{\pCCUINITdesc}{\gls{pCCUINIT}\label{step:sysboot:ccu-init}}
\newcommand{\pCCUINIT}{\hyperref[step:sysboot:ccu-init]{\gls{pCCUINIT}}\xspace}
\newacronym{pSDEX}{SDEX}{Seed exchange}
\newcommand{\pSDEXdesc}{\gls{pSDEX}\label{step:sysboot:seedExchange}}
\newcommand{\pSDEX}{\hyperref[step:sysboot:seedExchange]{\gls{pSDEX}}\xspace}
\newacronym{pINIT}{CONF}{Communication channel configuration}
\newcommand{\pINITdesc}{\gls{pINIT}\label{step:sysboot:ccuInitialization}}
\newcommand{\pINIT}{\hyperref[step:sysboot:ccuInitialization]{\gls{pINIT}}\xspace}
\newacronym{pSPEC}{SPEC}{Specification of functions and policies}
\newcommand{\pSPECdesc}{\gls{pSPEC}\label{step:sysboot:policymaker}}
\newacronym{nRA}{RA}{A technical note on the remote attestation}
\newcommand{\nRAdesc}{\gls{nRA}\label{note:remote-attestation}}
\newacronym{nWP}{WS}{writing stage}
\newcommand{\nWPdesc}{\gls{nWP}\label{note:writing-phase}}
\newcommand{\nWP}{\hyperref[note:writing-phase]{\gls{nWP}}\xspace}
\newacronym{nWPi}{WS-i}{Client encrypted writing}
\newcommand{\nWPidesc}{\gls{nWPi}\label{step:writing-phase:client}}
\newacronym{nWPii}{WS-ii}{CCU encrypted writing}
\newcommand{\nWPiidesc}{\gls{nWPii}\label{step:writing-phase:tee}}
\newcommand{\nWPii}{\hyperref[step:writing-phase:tee]{\gls{nWPii}}\xspace}
\newacronym{DecMaking}{DECM}{Decision-making support}
\newcommand{\DecMakingdesc}{\gls{DecMaking}\label{step:dataflow:decision}}
\newacronym[\glslongpluralkey={Business Processes}]{bp}{BP}{Business Process}
\newacronym{bpi}{BPI}{Business Process Intelligence}
\newacronym{bpm}{BPM}{Business Process Management}
\newacronym{bpms}{BPMS}{Business Process Management System}
\newacronym{bpmn}{BPMN}{Business Process Model and Notation}
\newacronym{dmn}{DMN}{Decision Model and Notation}
\newacronym{dss}{DSS}{Decision Support System}
\newacronym{omg}{OMG}{Object Management Group}
\newacronym{cpn}{CPN}{colored Petri net}
\newacronym{kpi}{KPI}{Key Performance Indicator}
\newacronym{ocbc}{OCBC}{Object-centric Behavioral Constraints}
\newacronym{soa}{SOA}{Service-Oriented Architecture}
\newacronym{pn}{PN}{Petri net}
\newacronym{wf}{WF}{workflow}
\newacronym{wfms}{WfMS}{Workflow Management System}
\newacronym{xes}{XES}{eXtensible Event Stream}
\newacronym{yawl}{YAWL}{Yet Another Workflow Language}
\newglossaryentry{task}{%
	name={task},description={the non-divisible, elementary activity}}
\newglossaryentry{promod}{%
	name={process model},description={the model of a process}
}
\def\LogAlph {\ensuremath{\Sigma}}
\newglossaryentry{logalph}{
	name={log alphabet},description={the process alphabet, as reflected in a log},%
	symbol={\LogAlph}}
\def\Evt {\ensuremath{e}}
\newglossaryentry{evt}{
	name={event},description={a record of an instantaneous fact during the process enactment},%
	symbol={\Evt}}
\def\Trc { \ensuremath{\tau} }
\newglossaryentry{trace}{
	name={trace},description={a sequence of \glsplural{evt}},%
	symbol={\Trc}}
\def\EvtLog {\ensuremath{L}}
\newglossaryentry{evtlog}{
	name={event log},description={a collection of \glstext{evttrace}s},%
	symbol={\EvtLog}}
\newcolumntype{d}{>{\columncolor{gray!10}}c}
\newcolumntype{m}{>{\columncolor{gray!10}}l}
\newenvironment{iiilist}%
{\begin{inparaenum}[\itshape(i)\upshape]}%
{\end{inparaenum}}
\newcommand{\mysubsubsection}[1]{\noindent \textbf{#1.}}
\newcommand\mypara[1]{\noindent \textit{#1.}}
\NewDocumentEnvironment{AuthNote}{+o+o}{%
	\IfValueT{#2}{\marginnote{\scriptsize{#2}}}%
	\begin{scriptsize}
		\colorbox{gray}%
		{\color{white} Note\IfValueT{#1}{ (#1)}:}%
		\quad%
		\color{brown}
}{%
	\normalcolor
	\end{scriptsize}
}
\newcommand{\LipsumGray}[1][]{{\color{gray}\ifthenelse{\equal{#1}{}}{\lipsum}{\lipsum[#1]}}}
\newcolumntype{D}[1]{S[
	table-omit-exponent,
	round-mode=places,
	round-integer-to-decimal,
	round-precision={#1}]} 
\providecommand{\ie}{{i.e.,}\xspace}
\providecommand{\eg}{{e.g.,}\xspace}
\newcommand{\SmallCode}[1]{\normalsize\texttt{#1}\normalsize}
\DeclareDocumentCommand{\crefalgln}{ o m }{%
	\IfNoValueF{#1}{\cref{#1}, }
	{\hyperref[#2]{ln.~\ref{#2}}}
}%
\crefname{algocf}{Alg.}{Algs.}
\Crefname{algocf}{Algorithm}{Algorithms}
\crefname{aAlgocflLine}{line}{lines}
\Crefname{algocfline}{Line}{Lines}
\begin{document}
%
\title{A Secure, Confidential, and Verifiable \\ Decision Support System}
%
%
\author{Edoardo Marangone\inst{1}\orcidlink{0000-0002-0565-9168} \and
Eugenio Nerio Nemmi\inst{1}\orcidlink{0000-0001-6518-7863} \and
Daniele Friolo\inst{1}\orcidlink{0000-0003-0836-1735} \and \\
Giuseppe Ateniese\inst{2}\orcidlink{0000-0002-0848-878X} \and
Ingo Weber\inst{3}\orcidlink{0000-0002-4833-5921} \and
Claudio Di Ciccio\inst{4}\orcidlink{0000-0001-5570-0475}}
\authorrunning{E. Marangone et al.}
%
\institute{Sapienza University of Rome \email{\{lastname\}@di.uniroma1.it}
	\and
George Mason University \email{ateniese@gmu.edu}\\
	\and
Technical University of Munich \& Fraunhofer Gesellschaft \email{ingo.weber@tum.de}
	\and
Utrecht University \email{c.diciccio@uu.nl}
}
\maketitle              
\begin{abstract}
	\label{sec:abstract}%
Decision support systems are increasingly adopted to automate decision-making processes across industries, organizations, and governments. 
Decision support demands data privacy, integrity, and availability while ensuring customization, security, and verifiability of the decision process.
Existing solutions fail to guarantee those properties altogether. 
To overcome this limitation, we propose SPARTA, an approach based on Trusted Execution Environments~(TEEs) that automates decision processes.
To guarantee privacy, integrity, and availability, 
SPARTA employs efficient cryptographic techniques on notarized data with access mediated through user-defined access policies.
Our solution allows users to define decision rules, which are translated to certified software objects deployed within TEEs, thereby guaranteeing customization, verifiability, and security of the process.
With experiments run on public benchmarks and synthetic data, we show our approach is scalable and adds limited overhead compared to non-cryptographically secured solutions.
\keywords{TEE  \and Privacy \and Security \and Decision support.}
\end{abstract}

\section{Introduction}
\label{sec:intro}
Organizations often need to make decisions collaboratively across trust boundaries while processing confidential data contributed by multiple independent parties. Such tasks are commonly supported by \gls{dss}, which are widely adopted across domains due to their ability to process large volumes of data, provide traceability, and enhance decision accuracy \cite{ARNOTT2008657,BARRACOSA2023114046,PEREIRA2022113795}. However, collaborative decision support creates a fundamental tension: the decision logic must be verifiable, and the process must be auditable. At the same time, the underlying data must remain confidential, including from the platform that executes the computation. 
A practical solution must satisfy four properties simultaneously: 
\begin{iiilist}
	\item confidentiality of input data and decision logic against all parties, including the computation platform; 
	\item verifiability of the decision process without exposing inputs; 
	\item user-defined decision rules with fine-grained access control; and 
	\item practical performance on realistic workloads. 
\end{iiilist}
No existing approach achieves all four properties. Blockchain-based solutions \cite{haarmann2018dmn} provide transparency but sacrifice data privacy. Homomorphic encryption and other privacy-preserving approaches \cite{Rahulamathavan} preserve confidentiality, but often incur prohibitive overhead, limit expressiveness, and may rely on trusted authorities for key management, thereby introducing single points of failure. Cloud-based platforms \cite{LIU2018825} offer flexible execution but require trusting a third party with plaintext data. Systems that support decision-process customization are typically designed as modules within centralized information systems, thereby guaranteeing neither privacy nor verifiability \cite{bazhenova2019bpmn}. \TEE and blockchain combinations, such as TEBDS \cite{TEBDS}, address some of these concerns, but still depend on centralized key management.

\begin{sloppypar}
We refer to this fourfold combination as the \textbf{confidential verifiable decision-support problem}, which introduces fundamental security challenges in multi-party settings.
As a solution to that, we present \SPARTA, an approach that leverages {\TEE}s to execute automated decision support in multi-party cooperation scenarios securely. \SPARTA executes data-driven decision functions based on custom decision logic inside attested trusted applications, stores encrypted data in a distributed file system, and uses a public blockchain as a notarization layer to guarantee traceability and integrity. Unlike prior approaches that rely on centralized key management, \SPARTA derives all keys within the \TEE from a shared seed exchanged via mutual attestation, thereby eliminating single points of failure. To express decision logic and access control, we adopt well-established standards, namely Decision Model and Notation (DMN) \cite{Dumas.etal/2018:FundamentalsofBPM} and \ALFA.\footref{foot:alfa} \SPARTA considers a threat model in which data providers are honest, the decision requester is potentially malicious, and the \TEE provides hardware-level isolation, excluding side channel attacks, which we discuss as a limitation.
We provide the following contributions:
\begin{compactenum}
	\item \textbf{Security architecture.} We design a protocol for confidential multi-party decision support that guarantees data privacy, integrity, and forward/backward secrecy by combining TEE-based execution with per-record key derivation and decentralized storage, without requiring a persistent trusted third party beyond initial certification.
	\item \textbf{Verifiable access-controlled computation.} We present a mechanism that translates user-defined decision logic (DMN) and attribute-based access policies (\ALFA) into attested, tamper-proof trusted applications, enabling fine-grained and verifiable authorization over confidential computations.
	\item \textbf{Security analysis and evaluation.} We formally define the security properties (correctness, data privacy, data integrity, forward/backward secrecy) of the system under a stated threat model, provide a security proof, and empirically demonstrate practical scalability on cross-domain benchmarks with an overhead of less than \qty{80}{\milli\second} compared to unprotected execution.
\end{compactenum}
\end{sloppypar}

\section{Background}
\label{sec:background}
Our approach is built upon the core pillars of confidential computing, and decision logic with access control.
\textbf{Confidential computing} is a paradigm wherein sensitive data undergoes processing within a protected processing unit \cite{ConfidentialComputingBook}.
Confidential computing underpins the~\textbf{\acrfullpl{tee}}~\cite{sabt2015trusted}. 
A \TEE  satisfies
\begin{iiilist}
	\item \textit{data confidentiality}: unauthorized entities cannot access data residing within the \TEE.
	\item \textit{data integrity}: unauthorized entities cannot modify, delete, or insert data inside the \TEE.
	\item \textit{code integrity}, unauthorized entities cannot manipulate, delete, or introduce alterations to the code \TEE. 
\end{iiilist}
Our work leverages Intel \SGX \cite{SGX1},  
facilitating the secure execution of code and data within \emph{enclaves}, namely hardware-protected environments. 
\SGX further enables \textit{remote attestation}, allowing external entities to authenticate a target enclave. 
One of the most well-known languages to model and formalize \textbf{decision logic} in business applications is \acrfull{dmn} \cite{dmn}, 
a standard by the \gls{omg}\footnote{\href{https://www.omg.org/spec/DMN}{\nolinkurl{omg.org/spec/DMN}}. Accessed: 2026-04-09
} 
for modeling and formalizing decision logic in multi-party processes and business applications.
A \gls{dmn}, is a decision table that captures critical knowledge and automates decision logic \cite{calvanese2016semantics}.
A \gls{dmn} table consists of columns representing the inputs and outputs of a decision, and rows denoting rules (the decision logic). Rules are expressed using \FEEL, a language defined within the \gls{dmn} standard. 
Each row has an identifier, a condition for each input column, and one specific value for each output column. 
A rule is considered applicable when the input values satisfy all associated conditions; in that case, the corresponding output values are produced \cite{calvanese2016semantics}. 
To express \textbf{authorization policies} over the usage of resources, the
\ALFA\footnote{\href{https://alfa.guide/alfa-authorization-language/}{\nolinkurl{alfa.guide/alfa-authorization-language/}}. Accessed: 2026-04-09
\label{foot:alfa}} language provides a concise, human-readable format. Built on top of the eXtensible Access Control Markup Language~(XACML),\footnote{\href{https://docs.oasis-open.org/xacml/3.0/xacml-3.0-core-spec-os-en.html}{\nolinkurl{docs.oasis-open.org/xacml}}. Accessed: 2026-04-09
} \ALFA offers a simplified syntax for defining fine-grained, attribute-based access control~(ABAC) policies based on users, actions, resources, and context. 

In addition, our architecture employs two additional buttresses for the notarization and storage of operations and data.
\textbf{Blockchains} are Distributed Ledger Technologies~(DLT), wherein transactions are sorted, organized into blocks, and interconnected to form a chain possessing resistance against tampering, achieved through cryptographic methods. 
Public blockchains like Ethereum 
and Algorand 
permit the deployment and execution of programs named smart contracts \cite{IntroducingEthereumandSolidity}. 
To mitigate the expenses associated with the invocation of smart contracts, external \PtoP systems are frequently employed for storing large volumes of data \cite{ArchitectureforBlockchainApplications}, such as \textbf{\IPFS} \cite{ipfs},\footnote{\label{foot:ipfs} \href{https://ipfs.tech/}{\nolinkurl{ipfs.tech}}. Accessed 2026-03-19.} a distributed system designed for file storage and access.

\section{Problem illustration and requirements}
\label{sec:running-example}
To illustrate the problem we address, we introduce a real-world scenario from the healthcare domain: a vaccine distribution campaign inspired by \cite{vaccinedistribution,noh2021group} and linked to a research project funded by the European NextGenerationEU.\footnote{SERICS (PE00000014), funded by the Italian Ministry of University and Research.} 
We use this scenario as a running example and motivating case study throughout the paper.
In our scenario, a central medical hub (hereafter, medical hub) receives vaccine supplies and manages their dispatch within a federal State; patients receive vaccine injections, vaccination centers administer them, and specialized carriers deliver the vaccines on-site.

In our scenario, we distinguish the following user \emph{roles}: medical hub, patient, vaccination center, and carrier. A role is one of the \emph{attributes} that characterize a user.
\begin{table}[b]
	\caption{Examples of users, groups, and attributes}%
	\label{tab:attributes}%
	\centering
	\resizebox{\columnwidth}{!}{

\begin{tabular}{lll}
	\toprule
	\textbf{User} & \textbf{Group} & \textbf{Attributes} \\
	\midrule
	
	Andre Smith & Data Provider &
	\SmallCode{Role="Patient"}, \SmallCode{Region="Sardinia"} \\
	
	
	National vaccine hub & Data Provider, Decider &
	\SmallCode{Role="CentralMedicalHub"}, \SmallCode{Country="Italy"} \\
	
	
	Ayala PLC & Data Provider &
	\SmallCode{Role="VaccinationCenter"}, \SmallCode{Region="Tuscany"} \\
	
	PrimeWay & Data Provider &
	\SmallCode{Role="Carrier"}, \SmallCode{Area="North"} \\
	
	
	\bottomrule
\end{tabular}
	}
\end{table}
In the first stage of our running example, users belonging to the \emph{\DOwner} group contribute data to the system; we refer to this stage as \emph{data provision}. \Cref{tab:attributes} reports examples of user roles, groups, and attributes, while \Cref{tab:data} shows part of the data that these users provide to the system. For instance, the National vaccine hub, which serves as a medical hub in Italy, reports vaccine stock levels and storage temperatures. Andre Smith, a patient from Sardinia, submits vaccination requests together with personal information, such as pre-existing conditions and family medical history. Ayala PLC, a vaccination center in Tuscany, reports storage capacity and vaccination progress. PrimeWay, a carrier, registers service-related information such as the number of trucks and refrigeration capacity.
%
\begin{table}[tb]
	\caption{An example of the data fields users provide to the system}%
	\label{tab:data}%
	\centering
	\resizebox{\textwidth}{!}{%
	\begin{tabular}{llllll}
		\toprule
		\multicolumn{3}{c}{\textbf{Central medical hub}} &
		\multicolumn{3}{c}{\textbf{Vaccination center}} \\
		\cmidrule(r){1-3} \cmidrule(r){4-6}
		\multicolumn{1}{l}{Field} & \multicolumn{1}{l}{Type} & \multicolumn{1}{l}{Value} &
		\multicolumn{1}{l}{Field} & \multicolumn{1}{l}{Type} & \multicolumn{1}{l}{Value} \\
		\cmidrule(r){1-3} \cmidrule(r){4-6}
		Stock ID             & Integer   & \texttt{10523}                          & Hub Name                    & String  & \texttt{Ayala PLC}                     \\
		Quantity Available   & Integer   & \texttt{2400}                           & Max Storage Capacity        & Integer & \texttt{2000}                          \\
		Storage Temperature  & Float     & \texttt{-20.5}~{[°C]}     & Vaccination Progress        & Integer & \texttt{686}~{[per day]} \\
		\bottomrule
	\end{tabular}%
}

\bigskip

\resizebox{\textwidth}{!}{%
	\begin{tabular}{llllll}
		\toprule
		\multicolumn{3}{c}{\textbf{Patient}} &
		\multicolumn{3}{c}{\textbf{Carrier}} \\
		\cmidrule(r){1-3} \cmidrule(r){4-6}
		\multicolumn{1}{l}{Field} & \multicolumn{1}{l}{Type} & \multicolumn{1}{l}{Value} &
		\multicolumn{1}{l}{Field} & \multicolumn{1}{l}{Type} & \multicolumn{1}{l}{Value} \\
		\cmidrule(r){1-3} \cmidrule(r){4-6}
		Full Name               & String    & \texttt{Andre Smith} & Carrier Name                   & String  & \texttt{PrimeWay}            \\
		Pre-existing Conditions & String    & \texttt{Asthma}         & Number of Trucks               & Integer & \texttt{13}                          \\
		Family Medical History  & String    & \texttt{Heart Disease}  & Refrigeration Capacity         & Boolean & \texttt{True}                        \\
		\bottomrule
	\end{tabular}%
}

\end{table}

In the subsequent \emph{data usage} stage, authorized users can access the data provided earlier, either at the level of single entities or in aggregated form. In our example scenario, patients can retrieve their declarations at any time, and carriers can obtain information about on-time delivery rates. Users in the \emph{\Decider} group can also request automated decision support. The medical hub, e.g., belongs to this group and can request decision support for vaccine supply distribution.
Decisions related to vaccine distribution include
\begin{iiilist}
	\item the priority to be assigned to patients for vaccine administration (which we use as a running example throughout this paper),
	\item the restocking needs of each vaccination center, and
	\item the carrier that should handle each delivery.
\end{iiilist}

In our example, several critical aspects need to be considered. 
First, all data is confidential and must be protected from unauthorized access.
For example, revealing a single patient's metadata constitutes a privacy breach. 
Second, data integrity and availability are critical.
Any unauthorized modification could alter decisions or falsify patient records, while system failures or data unavailability could disrupt the decision-making process.
Ensuring privacy, integrity, and availability while enabling customizable, secure, and verifiable automated decision-making is an open challenge that motivates our research.

\Cref{tab:requirements} presents the requirements identified from the motivating use case that underpins our approach.
\begin{table}[tb]
	\caption{Requirements and corresponding actions}%
	\label{tab:requirements}%
	\centering
	\resizebox{\textwidth}{!}{
		\begin{tabular}{p{0.3cm} p{5cm} p{6cm}}
	\toprule
	& \textbf{Requirement} & \textbf{Approach} \\
	\midrule
	{\Req\label{req:availability}} & The stored \textbf{information} shall be tamper-proof, traceable, confidential, permanent, and available & We store the encrypted data in a tamper-proof distributed file storage, and we use a notarization system to keep track of information \\
	{\Req\label{req:security}} & \textbf{Decision support} shall be user-definable, trusted, reproducible, verifiable, and confidential & We handle data within trusted hardware, encrypting them before being stored outside of it; data access and processing is established through customizable policies, turned into verifiable apps running in trusted hardware \\
	{\Req\label{req:interoperability}} & The decision support \textbf{system} must be secure, interoperable, available, replicable, and scalable & We employ standard languages to describe authorization policies (expressing what each user can do and what resources they may access) and decision logic (the rules to follow in making decisions) \\
	\bottomrule
\end{tabular}
	}
\end{table}
Ensuring that decision-critical information is consistently available, tamper-evident, and auditable is essential to both the efficiency of healthcare processes and trust in the system.
The information source must be immutable and accessible, with traceability ensured to facilitate subsequent validations and verifications. We formalize these needs in \cref{req:availability}. 
Since the system manages sensitive and confidential healthcare data, its security is crucial to ensuring accurate decisions and preventing unauthorized access, data leaks, malicious attacks, and system unavailability. For example, disclosing patients' metadata or their vaccination prioritization would violate their privacy. We enclose these properties in \cref{req:security}. 
Finally, our approach must be interoperable with existing data storage, access, and aggregation solutions while enabling secure, automated decision-making for confidential data.
We consider these aspects in \cref{req:interoperability}.
Our solution aims to meet these requirements.

\section{Approach}
\label{sec:approach}
In this section, we present our approach, named \acrfull{sparta}.
In our description, we motivate design choices and operations with the previously discussed requirements (see \cref{sec:running-example,tab:requirements} above).
To this end, we first provide an overview of its architecture (\cref{subsec:approach:actorsAndComponents}). Then, we describe how we initialize the system (\cref{sec:approach:system-init}) and 
thereby execute secure decision processes over confidential data (\cref{sec:approach:execution}). Finally, we outline a security analysis of our approach (\cref{sec:formalAnalysis}).

\subsection{Basic components and actors}
\label{subsec:approach:actorsAndComponents}
\begin{figure}[tb]
	\centering
	\begin{overpic}[width=\columnwidth]{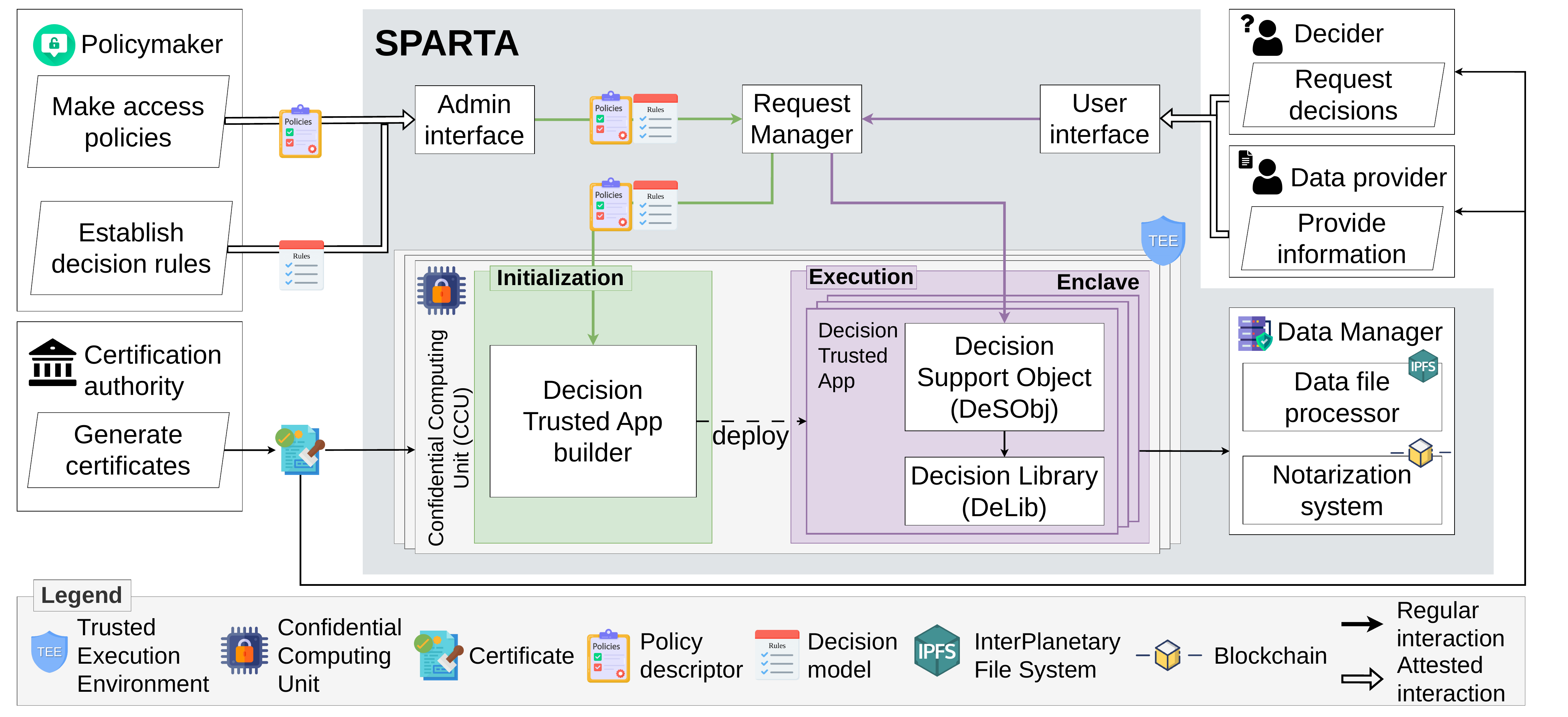}
		\put(15.7,39.5){\footnotesize\cref{lst:alfa-descriptor}}
		\put(15.7,31){\footnotesize\cref{tab:dmn}}
	\end{overpic}
	\caption[An overview of the approach]{An overview of the \SPARTA approach with the software components, user roles, and main information flow
	}
	\label{fig:general-architecture}
\end{figure}

\Cref{fig:general-architecture} provides a bird's eye view of the \SPARTA approach.
%
%
%
    The \textbf{\CCU} is the core component of \SPARTA. It is deployed inside a \TEE and can be replicated to improve data availability and system robustness, as required by \cref{req:availability,req:security}. 
	Dedicated {\DECTAPP}s execute a decision process each within their own enclave inside the \CCU. A \DECTAPP builder encodes and deploys them based on two inputs:
    \begin{iiilist}
    \item a \textit{decision model} (see \cref{tab:dmn}), which defines the functions exposed by the \DECTAPP, and
    \item a \textit{policy descriptor} (see \cref{lst:alfa-descriptor}), which specifies who can invoke those functions.
    \end{iiilist}
	More specifically, those inputs customize what we call a \DESOBJ. Internally, the \DECTAPP links the \DESOBJ with a \DLIB, namely a shared library we developed for maintainability and encapsulation purposes. It provides common utilities for {\DESOBJ}s, such as data retrieval and decryption, policy checking, and certificate validation.
    The \textbf{\RequestMan} acts as a fa\c{c}ade to the \CCU. It can be abstracted as a queue that handles both requests to and responses from the \CCU. Its purpose is to save them and preserve their order while the \CCU is handling other requests, thereby avoiding concurrency issues. The \RequestMan operates outside the \TEE, allowing the \DECTAPP to perform security-critical operations inside the enclave while avoiding overuse of protected memory. Consequently, all data exchanged through the \RequestMan must be encrypted, as we explain in \cref{sec:approach:execution}. The \RequestMan serves two entry points: one for system-initialization operations (\AdInt), such as certificate management and the submission of decision models and policy descriptors, and one for runtime interactions (\UsInt), through which {\DOwner}s and the \Decider communicate with the \CCU.
    The \textbf{\DataMan} is an external component providing two architectural buttresses:
	\begin{inparaenum}[\itshape(i)\upshape]
		\item The \DFile, which stores the content of the exchanged data, and
		\item the \DNotary, which saves the resource locators pointing to that data.
	\end{inparaenum}
    We realize these components through two established technologies to meet \cref{req:availability}.
	We employ \IPNS over \IPFS for the \DFile since it provides permanence, immutability, and content addressing for stored files. 
	We use a public blockchain as the \DNotary to store \IPNS names in a tamper-proof, available, and traceable way.
	Other solutions like STORJ\footnote{\href{https://www.storj.io/}{\nolinkurl{storj.io}}. Accessed: 2026-03-19.}
	and distributed relational databases~\cite{distributed-relational-database} may be used in scenarios with less restrictive security requirements.

We identify four main actors, each responsible for a distinct functionality.
\begin{inparaitem}[]
	\item The \textbf{\acrfull{ca}} is the trusted authority that issues digital certificates to users, certifying their \emph{attributes}. 
	In our example, the role of the \CA should be fulfilled by a globally recognized Certification Authority~(\eg \textit{DigiCert}, \textit{Entrust}, or \textit{GlobalSign}),\footnote{DigiCert: \href{https://www.digicert.com}{\nolinkurl{digicert.com}}; Entrust: \href{https://www.entrust.com}{\nolinkurl{entrust.com}}; GlobalSign: \href{https://www.globalsign.com/en}{\nolinkurl{globalsign.com}}. Accessed: 2026-04-07.} akin to what is offered by the National Health Service in the United Kingdom \cite{DBLP:journals/istr/Dalton03}. 
	\item The \textbf{\PO} defines the decision model and the policy descriptor needed for the deployment of the \DECTAPP. Considering the example in \cref{sec:running-example}, the \PO role can be fulfilled by a regulatory body such as the Ministry of Health, or by the consortium directly involved in the decision process, as in \cite{iacr/CerulliCNPS23}, upon reaching consensus.
	\item The \textbf{\DOwner} provides the input data and contextual information used for decision-making. In our running example, the \DOwner group comprises the roles medical hub, patient, vaccination center, and carrier.
	Finally, 
	\item the \textbf{\Decider} receives decision support based on the available data. If permitted by access policies, the \Decider may also retrieve raw data and invoke aggregation functions. In our scenario, the medical hub role acts as a \Decider.
\end{inparaitem}

\subsection{System initialization}
\label{sec:approach:system-init}

The system initialization phase goes through three consecutive steps:
\begin{inparaenum}[\itshape(i)\upshape]
\item \acrfull{pCERT}, 
\item \acrfull{pCCUINIT}; 
\item \acrfull{pSPEC}. 
\end{inparaenum}
We detail them in the following.


\mysubsubsection{\pCERTdesc}
In this step, the \CA certifies each actor's attributes by issuing a certificate. For our purposes, we consider X.509 certificates. Each certificate contains the actor's metadata, including attributes and public key. For the latter, we leverage \ECDSA \cite{ECDSA}. Before any interaction, each participant must verify the counterpart's certificate~(see \cref{req:interoperability}). The \CCU must likewise hold a valid certificate that allows other actors to confirm they are interacting with the intended \CCU~(see \cref{req:security}). To sign messages, actors use the \ECDSA private key corresponding to the public key attested to in their certificate; otherwise, the signature is invalid.

\mypara{\nRAdesc} 
Before making any request, a client (a \DOwner, \Decider, or another \CCU) performs remote attestation to establish trust in the \CCU inside the \TEE~(see \cref{req:security}). In this phase, we follow the RATS RFC standard \cite{birkholz2023rfc}, the foundation for attestation schemes such as Intel EPID and AMD SEV-SNP. Remote attestation allows the client to verify that they are sending data to a Trusted Application within a \TEE and to authenticate the \DESOBJ as a legitimate \CCU entity certified by the \CA (see above).
We slightly modify the standard procedure. The client:
\begin{iiilist}
	\item\label{ra:ca-cert}retrieves and verifies the \CCU's \CA-issued certificate using the \CA's \ECDSA public key, confirming the \CCU is the intended party;
	\item\label{ra:tls}retrieves and validates the \CCU's \TLS certificate (including its \ECDSA public key) to establish a secure connection; and
	\item\label{ra:report}obtains and verifies the \CCU's attestation report over the attested \TLS channel. The report contains the \emph{measurement} of the \DECTAPP, i.e., the hash of its code and data, signed with the \CCU's hardware attestation key. The client validates it by authenticating the \CCU, decrypting the report, comparing the reported measurement with the expected reference value, and verifying that the report is consistent with the server's \TLS certificate.
\end{iiilist}
		
\mysubsubsection{\pCCUINITdesc} 
This step sets up the replicated {\CCU}s and the secure communication with them. It encompasses two phases: \textbf{\gls{pSDEX}} and \textbf{\gls{pINIT}}.
\textbf{\pSDEXdesc} addresses the following two issues. Relying on a single \CCU introduces significant vulnerabilities that could compromise \cref{req:availability,req:security}. Should the component fail, experience downtime, or be subject to malicious attacks, the entire execution process may halt indefinitely, and recovery may become impossible. Also, if that \CCU alone held the cryptographic keys, all associated data would be permanently lost.
To address these issues, \SPARTA's {\CCU}s share a common seed, enabling all replicas to deterministically derive the necessary cryptographic keys (meeting \cref{req:security}).
To describe how \pSDEX unfolds, we shall use an example with two {\CCU}s: \CCU{1} and \CCU{2}. \CCU{1} generates and sends the seed. It proceeds as follows:
\begin{inparaenum}[\itshape(i)\upshape]
	\item it performs the remote attestation of \CCU{2};
	\item if remote attestation (\nRAdesc) succeeds, it generates a secure random string~(the seed), seals it with a sealing key known only to itself~(see \cref{sec:background}), and sends the seed to \CCU{2} through the secure channel established during \nRAdesc.
\end{inparaenum}
After receiving the seed from \CCU{1}, \CCU{2} performs the same operations: it remotely attests \CCU{1} and, if successful, seals the seed with a sealing key known only to itself.
Once exchanged, the seed can be shared with an arbitrary number of additional {\CCU}s. We describe how it is used to derive encryption keys in \cref{sec:approach:execution}.
The subsequent phase (\textbf{\pINITdesc}) generates a \TLS certificate. This certificate serves two purposes: It enables the establishment of a secure communication channel with the \CCU during remote attestation, and provides the client with the shared secret needed to encrypt data before sending it through the \RequestMan. Since the \RequestMan operates outside the \TEE (see \cref{subsec:approach:actorsAndComponents}), all data exchanged with the system must be encrypted.
To do so, we slightly modify the \TLS certificate that each \CCU generates, allowing it to embed a public key deterministically generated from the shared seed~(see \pSDEX). Since all {\CCU}s share the same seed, they produce identical key pairs. We leverage \ECDH \cite{ECDH}~(X25519 curve \cite{Curve25519}) for the key pair generation. The \CCU signs the \TLS certificate with its \ECDSA private key, which is \CA-certified~(see \pCERT), allowing any participant to verify it. 
At the end of the {\pINIT} phase (and hence of the {\pCCUINIT} step), the \CCU is ready to receive requests and provide responses through the \RequestMan.
    

	
\begin{sloppypar}
\mysubsubsection{\pSPECdesc}
With this step, the \PO specifies the decision functions the system should realize in a decision model, and their access policies in a policy descriptor. To meet \cref{req:interoperability}, the \PO expresses these using standard, well-established, and customizable languages: DMN with \FEEL and \ALFA, respectively (see \cref{sec:background}). Their details transcend the scope of this paper, 
and illustrate their use by example here.
\begin{table}[b]
	\caption{Decision model for the patient prioritization of \cref{sec:running-example}}
	\label{tab:dmn}
	\resizebox{\columnwidth}{!}{
		{%
	\begin{tabular}{ccccccc|c}
		\toprule
		\multicolumn{1}{l}{{\textsf{When}}} &
		\multicolumn{1}{l}{{\textsf{and}}} &
		\multicolumn{1}{l}{{\textsf{and}}} &
		\multicolumn{1}{l}{{\textsf{and}}} &
		\multicolumn{1}{l}{{\textsf{and}}} &
		\multicolumn{1}{l}{{\textsf{and}}} &
		\multicolumn{1}{l|}{{\textsf{and}}} &
		\multicolumn{1}{l}{{\textsf{then}}} \\
		
		\multirow{2}{*}{\makecell{\textbf{Age}}} &
		\multirow{2}{*}{\makecell{\textbf{Pre}\\\textbf{Existing}\\\textbf{Conditions}}} &
		\multirow{2}{*}{\makecell{\textbf{Current}\\\textbf{Medications}}} &
		\multirow{2}{*}{\makecell{\textbf{Previous}\\\textbf{Vaccinations}}} &
		\multirow{2}{*}{\makecell{\textbf{Family}\\\textbf{Medical}\\\textbf{History}}} &
		\multirow{2}{*}{\makecell{\textbf{Consent}\\\textbf{Form}\\\textbf{Signed}}} &
		\multirow{2}{*}{\makecell{\textbf{Stock}\\\textbf{Coverage}\\\textbf{Days}}} &
		\multirow{2}{*}{\makecell{\textbf{Patient}\\\textbf{Priority}\\\textbf{WAggr}}} \\
		& & & & & & & \\
		
		\multicolumn{1}{r}{\textsf{\tiny{:number}}} &
		\multicolumn{1}{r}{\textsf{\tiny{:string}}} &
		\multicolumn{1}{r}{\textsf{\tiny{:string}}} &
		\multicolumn{1}{r}{\textsf{\tiny{:string}}} &
		\multicolumn{1}{r}{\textsf{\tiny{:string}}} &
		\multicolumn{1}{r}{\textsf{\tiny{:boolean}}} &
		\multicolumn{1}{r|}{\textsf{\tiny{:number}}} &
		\multicolumn{1}{r}{\textsf{\tiny{:string}}} \\
		\midrule
		
		\multirow{2}{*}{$\geq$60} &
		\multirow{2}{*}{\makecell{``Asthma'',\\``Diabetes''}} &
		\multirow{2}{*}{\makecell{``Metformin'',\\``Albuterol''}} &
		\multirow{2}{*}{\makecell{``COVID-19'',\\``Influenza''}} &
		\multirow{2}{*}{\makecell{``Diabetes'',\\``Heart Disease''}} &
		\multirow{2}{*}{true} &
		\multirow{2}{*}{$\leq$3} &
		\multirow{2}{*}{``High''} \\
		& & & & & & & \\
		
		{[}18..60{[} & ``Asthma'' & ``Metformin'' & ``Influenza'' & - & true & {]}3..7{]} & ``Medium'' \\
		
		\textless{}18 & - & ``Lisinopril'' & ``COVID-19'' & - & true & \textgreater{}7 & ``Low'' \\
		
		\tikzmarknode{startbrace}{-} & - & - & - & - & \tikzmarknode{endbrace}{false} &
		\tikzmarknode{aggstart}{-} & \tikzmarknode{decstart}{\tikzmarknode{decend}{``Ineligible''}} \\
		\bottomrule
	\end{tabular}%
	\begin{tikzpicture}[overlay,remember picture]
		\draw[decorate, decoration={brace, amplitude=8pt, mirror}]
		([xshift=-15pt,yshift=-8pt]startbrace.south west) --
		([xshift=12pt,yshift=-8pt]endbrace.south east)
		node[midway, below=8pt] {\footnotesize\textit{source data}};
		
		\draw[decorate, decoration={brace, amplitude=8pt, mirror}]
		([xshift=-10pt,yshift=-8pt]aggstart.south west) --
		([xshift=11pt,yshift=-8pt]aggstart.south east)
		node[midway, below=8pt] {\footnotesize\textit{FEEL expr.}};
		
		\draw[decorate, decoration={brace, amplitude=8pt, mirror}]
		([xshift=-4pt,yshift=-7pt]decstart.south west) --
		([xshift=5pt,yshift=-7pt]decend.south east)
		node[midway, below=8pt] {\footnotesize\textit{decision}};
	\end{tikzpicture}%
}
	}
\end{table}
\Cref{tab:dmn} shows a DMN table for prioritizing vaccine requests by patients. The \emph{source data} input columns refer to patient data provided in the \emph{data provision} stage in \cref{sec:running-example}. DMN also allows the specification of \FEEL expressions that filter and then aggregate data. For example, the \SmallCode{StockCoverageDays} input column is a \FEEL expression that calculates the number of days the currently available vaccine stock can sustain, based on vaccination center activity and vaccination progress. 
The corresponding \FEEL expression 
is:
\lstinputlisting[escapechar=§,style=listing,numbers=none,linerange={27-31}]{code/feel-formulas.txt}
The \emph{decision} output column shows the corresponding decision.
\begin{lstfloat}[tb]
	\caption{An excerpt of a policy descriptor file in ALFA}
	\label{lst:alfa-descriptor}
	\lstinputlisting[escapechar=§,style=listing]{code/alfa_policy_descriptor.txt}
\end{lstfloat}
\Cref{lst:alfa-descriptor} shows an excerpt of the \ALFA descriptor file.%
\footnote{Full descriptor: \href{https://github.com/apwbs/SPARTA/blob/main/src/alfaparser/policy.alfa}{\nolinkurl{github.com/apwbs/SPARTA/blob/main/src/alfaparser/policy.alfa}}}
In our running example, only a medical hub located in Italy may invoke the \SmallCode{PatientPriorityWAggr} 
function, which determines the priority level assigned to a patient for vaccine administration based on patient data and the aggregated \SmallCode{StockCoverageDays} input. The \SmallCode{accessDecisionWAggr} rule enforces this restriction.
Finally, the \DECTAPP builder compiles the ALFA policy and the DMN table into a \DESOBJ, which is deployed as a \DECTAPP within the \TEE enclave. Thereafter, the \Decider can invoke the decision function. As we describe next, the \DESOBJ makes a decision within the enclave, so the invoker reads the result but not the input data.
\end{sloppypar}

\subsection{Execution}
\label{sec:approach:execution}
The lilac box in \cref{fig:general-architecture} encloses the components involved in the execution phase of \SPARTA. It is comprised of 
a \textbf{\nWPdesc}, wherein data is provided in an encrypted form, and the \textbf{decision-making stage (DECM)}, in which automated decisions are made.
These stages correspond, respectively, to the \emph{data provision} and \emph{data usage} stages of the running example in \cref{sec:running-example}.


\mysubsubsection{\nWPidesc}
In this step, the \DOwner encrypts the data before transmitting it through the \RequestMan, which operates outside the trusted environment. 
Since multiple {\CCU}s are possible, the \DOwner needs to encrypt the data with a key known to all {\CCU}s. To achieve this, the \DOwner retrieves a \CCU's \TLS certificate (the one generated during the \pINIT phase), and if the remote attestation completes successfully, derives the shared secret from it and generates the symmetric key required to encrypt the data, thereby meeting \cref{req:availability}.
Upon receiving the request, the \CCU verifies the \DOwner's certificate and signature, derives the same shared secret, reconstructs the symmetric key, and decrypts the payload, thus meeting \cref{req:security}. 
This mechanism relies on an \ECDH key exchange over the X25519 curve \cite{ECDH,Curve25519}, \SHA-256 \cite{SHA2}, 
and \AESGCM \cite{AES-GCM2}. 

\mysubsubsection{\nWPiidesc}
This step begins after the \CCU decrypts the \DOwner's input data. The \CCU must encrypt the data before storing it outside the enclave, as it would otherwise be accessible in plaintext. To guarantee this, the \CCU generates a distinct symmetric key for each record from the shared seed established during \pSDEX and a newly generated random number. Thereafter, the \CCU encrypts the data with this key using \AESGCM to meet \cref{req:security}. It stores the ciphertext on \IPFS 
and records the required \IPNS names in a smart contract on the public blockchain in compliance with \cref{req:availability}.
The random number is stored in plaintext alongside the encrypted data on \IPFS; this does not compromise security, since the encryption key depends on both the random number and the private seed. Consequently, an attacker cannot derive valid keys without access to the seed. Indeed, without a valid \CA certificate, an attacker cannot participate in \pSDEX,
while the seed itself remains protected by a sealing key. Finally, we remark that if a \CCU becomes unavailable, any other \CCU that participated in \pSDEX can take over, as it possesses the same seed.

\begin{sloppypar}
\mysubsubsection{\DecMakingdesc}
In this step, the \Decider requests decision support. To do so, it first performs remote attestation against the \CCU. Then it invokes a decision function exposed by the \DESOBJ deployed within it (e.g., \SmallCode{PatientPriorityWAggr}), thereby meeting \cref{req:security}. The request includes the \Decider's certificate issued by the \CA, the name of the invoked function, and the \IPNS key name identifying the relevant data. After parsing the request, the \CCU verifies the \Decider's certificate and checks whether the certified attributes satisfy the access policy associated with the invoked function, as required by \cref{req:interoperability}. In our running example, this policy requires the invoker to be a medical hub located in Italy. If the policy is satisfied, the \CCU retrieves the ciphertext from \IPFS, recovers the symmetric key, decrypts the data, evaluates the required aggregation functions, and executes the decision function within the enclave, thereby preserving confidentiality and integrity in compliance with \cref{req:security}. In our example, this amounts to computing the \SmallCode{StockCoverageDays} aggregation before evaluating \SmallCode{PatientPriorityWAggr}. Finally, the \CCU returns the result to the \Decider through the \RequestMan.
\end{sloppypar}

\subsection{Security Analysis}
\label{sec:formalAnalysis}
To examine the security of our system, we first provide a threat model and the desired properties we aim to achieve.

\mysubsubsection{Threat Model}\label{formal-analysis:thread-model} 
\DOwner and \CA are honest, the \Decider 
is dishonest 
and secure authenticated channels can be established between authorized parties (\eg CA, Decider) and the \CCU.
We model the \CCU as a fully trusted third-party computing decisions through functions $\{f_j\}$ using a secret seed $s$.

\mysubsubsection{Desired Properties}\label{formal-analysis:desired-properties}
We informally specify the following required properties:

\mypara{Correctness}
For any function $f$ and any set of messages $M$ such that $f(M)=y$, if $M$ is provided to the~\CCU for function $f$ by the Data Provider obtaining ciphertexts $C$, and subsequently, ciphertexts $C$ are provided by the Decider back to the \CCU obtaining output $y'$, then $y'=f(M)=y$.

\mypara{Data Privacy}
An adversary cannot distinguish between the encryption of two messages giving the same output for an adversarially chosen function $f$.

\mypara{Data Integrity}
An adversary cannot tamper with encrypted messages.

\mypara{Forward/Backward Secrecy/Integrity}
If all ciphertext keys at time $t$ are leaked, ciphertexts encrypted at any different time $t'$ remain secure, and their underlying messages cannot be decrypted or forged.

\begin{theorem}[Informal]
    Assuming that the authenticated encryption scheme is secure, the hash function is modeled as a random oracle, and the TEE is secure, our system satisfies Correctness, Data Privacy, Data Integrity, and Forward/Backward Secrecy and Integrity.
\end{theorem}

The theorem follows from the security of the \TEE and the indistinguishability and untamperability properties of the underlying authenticated encryption scheme. For the complete formal analysis and proof, see \cref{app:formal}.

\section{Implementation and evaluation}
\label{sec:evaluation}
\mysubsubsection{Experimental Setup}
We implemented \SPARTA in Go and used EGo to deploy trusted applications
.\footnote{EGo: \href{https://www.edgeless.systems/products/ego}{\nolinkurl{edgeless.systems/products/ego}}; Intel SGX: \href{https://sgx101.gitbook.io/sgx101}{\nolinkurl{sgx101.gitbook.io/sgx101}}. Accessed: 2026-03-19.}
We implemented the \RequestMan as a Redis queue, used \textit{govaluate} to evaluate \FEEL expressions and access policies, and \textit{Expr} to evaluate aggregation functions.\footnote{Redis queue: \href{https://github.com/rq/rq}{\nolinkurl{github.com/rq/rq}}; govaluate: \href{https://github.com/Knetic/govaluate}{\nolinkurl{github.com/Knetic/govaluate}}; expr: \href{https://github.com/expr-lang/expr}{\nolinkurl{github.com/expr-lang/expr}}. Accessed: 2026-03-19.}
We ran all experiments on a \TEE hosted by an Intel Xeon Gold 5415+ CPU with \qty{128}{\giga\byte} of RAM, Ubuntu 22.04, and Intel SGX enabled with a PRMRR 
size of \qty{2}{\giga\byte}. Each experiment was repeated \num{10} times, and we report mean values.
Our implementation's source code, together with input data and outputs, is available at \href{https://github.com/apwbs/SPARTA}{\nolinkurl{github.com/apwbs/SPARTA}}.

Four questions drive our evaluation:
\begin{questiilist}
	\item\label{qst:scala}%
    Does \SPARTA scale well with input size? 
	\item\label{qst:overhead}%
    How much overhead encryption and \TEE execution introduce compared to 
    an unprotected baseline running on the same hardware?
	\item\label{qst:xdomain}%
    Can {\SPARTA} operate with 
    decision models across different domains?
    \item\label{qst:correct}
    How effectively does the dual-storage system reduce memory consumption?
\end{questiilist}

To answer these questions and test the distinct features of our system, we resort to the following three datasets. 
\begin{inparadesc}
    \item[\tMED~(real-world medical dataset)] is the  Decision Management Community challenge's dataset (\textit{Determine Medical Service Coverage}, released in Feb.\ 2021). It contains \num{16000} healthcare-related records. We use it for scalability and overhead measurement because of its large size.
    \item[\tCARD (real-world financial dataset)] is the \textit{Card Approval Decision} challenge's dataset (Oct.\ 2021). Through it, we evaluate cross-domain performance with diversified and complex predicates.
    \item[\tVAX~(synthetic vaccine dataset)] is a synthetic vaccine-distribution dataset we built based on our running example. 
\end{inparadesc}
In response to \ref{qst:scala}, we use the \tMED dataset to analyze scalability based on the number of input records, columns, and rules. We address \ref{qst:overhead} through a controlled comparison measuring encryption and {\TEE} overhead against a non-encrypted baseline running on the same machine. We demonstrate cross-domain performance (\ref{qst:xdomain}) with the \tCARD dataset. We use the \tVAX dataset to answer \ref{qst:correct} 
measuring the dual-storage memory optimization. 

\mysubsubsection{Results}
\mysubsubsection{\tMED} 
The original dataset contains \num{16369} {rules}, with \num{7} {input} and \num{3} {output} columns. To increase the number of columns, we artificially replicated the input columns to reach a total of \num{28}. Although artificial, this setting allows us to evaluate system performance under higher-dimensional input across multiple parameters (\ie rules, columns, and records). 
Unless specified, all experiments below used \num{300} 
randomly selected rules from the original benchmark. 
\mypara{Scalability}
\Cref{fig:tests:execution-analysis} shows execution time while increasing the number of records and columns. We vary the number of records from \num{1000} to \num{16000} with \qty{7}, \qty{14}, and \num{28} columns, and then vary the number of columns from \num{1} to \num{28} with \num{1000}, \num{8000}, and \num{16000} records. In both cases, the growth is linear. \Cref{fig:tests:increasing-rules} shows the same trend when increasing the number of rules from \num{100} to \num{1500} with \num{16000} records and \num{28} columns, representing the most demanding configuration tested.

	\begin{figure*}[tb]
		\subfloat[With \num{7}, \num{14}, and \num{28} columns]{{\includegraphics[width=0.49\columnwidth]{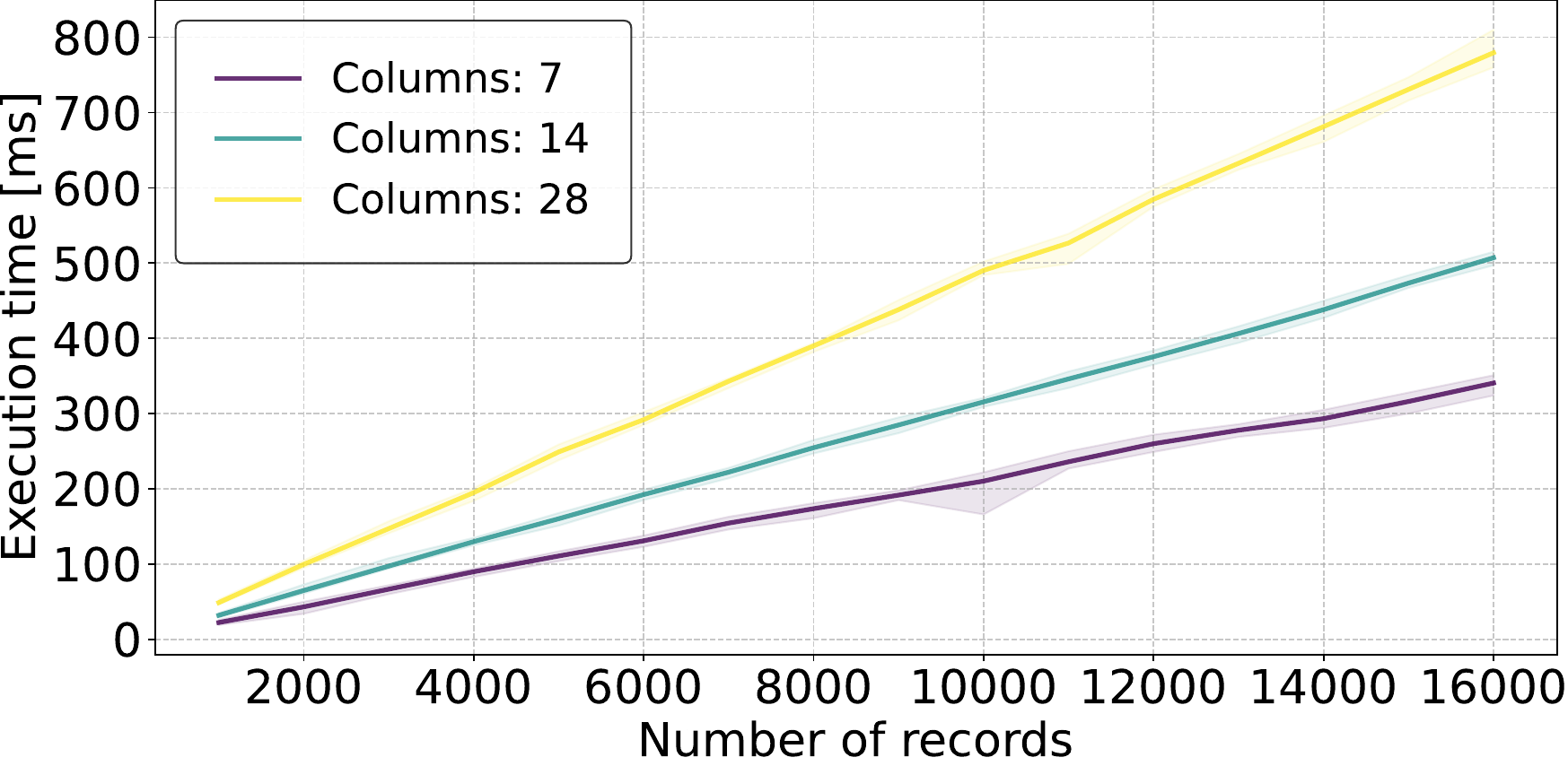} }  
			\label{fig:tests:fixed-columns}}
		\subfloat[With \qty{1}{k}, \qty{8}{k}, and \qty{16}{k} records]{{\includegraphics[width=0.49\columnwidth]{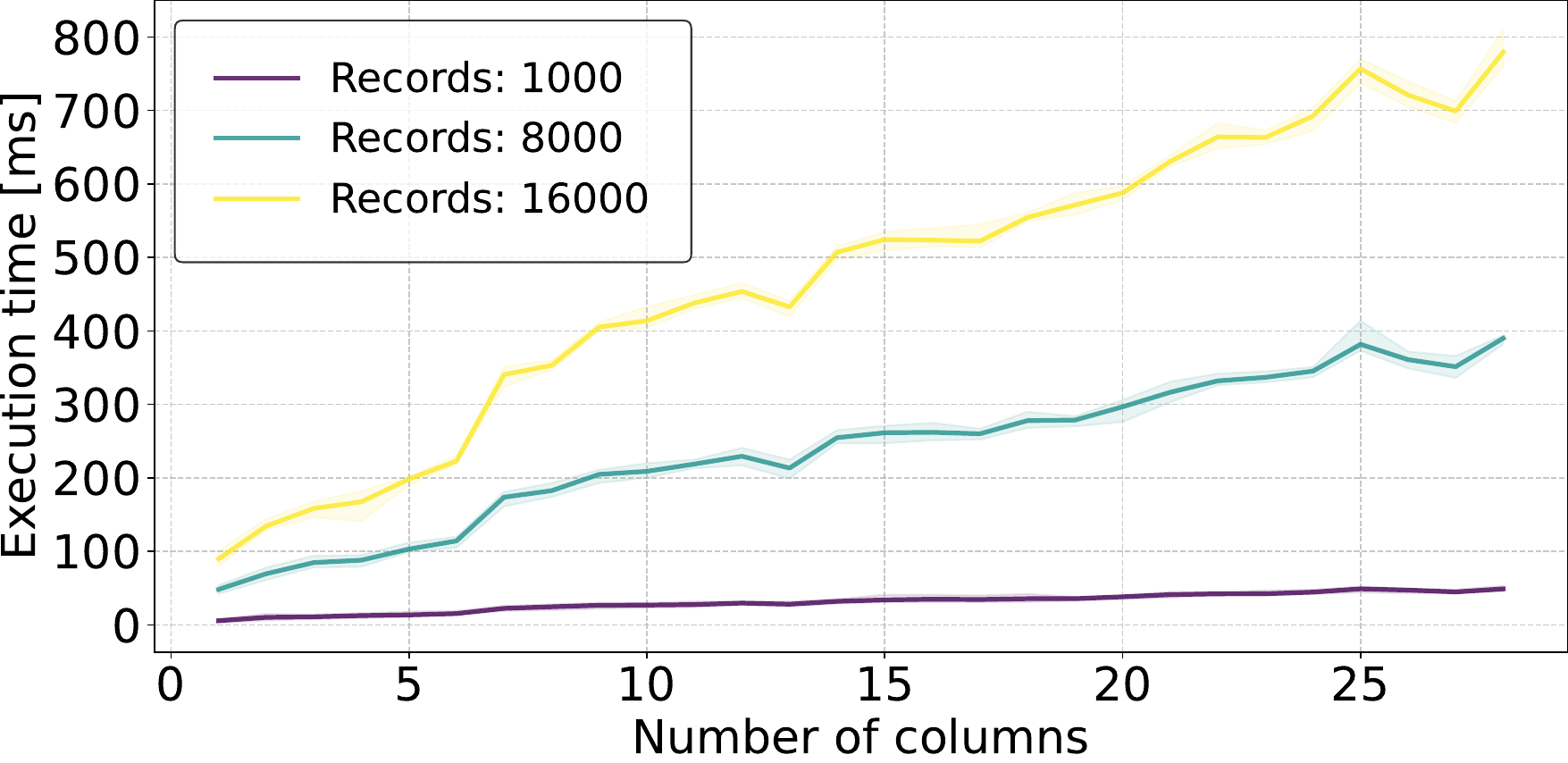} }
			\label{fig:tests:fixed-users}}
		\caption{Decision execution time vs.\ number of records (a) and columns (b)}
		\label{fig:tests:execution-analysis}
	\end{figure*}

\mypara{Aggregation overhead}
To measure the cost of aggregation, we augment the input with up to \num{21} columns, each of which computes an aggregation function over different input fields.
In particular, \Cref{fig:tests:aggregation} shows that decryption and aggregation add extremely limited overhead to the decision evaluation execution time, and that this overhead grows almost linearly. 

    \begin{figure*}[tb]
		\begin{floatrow}
			\ffigbox{\includegraphics[width=\columnwidth]{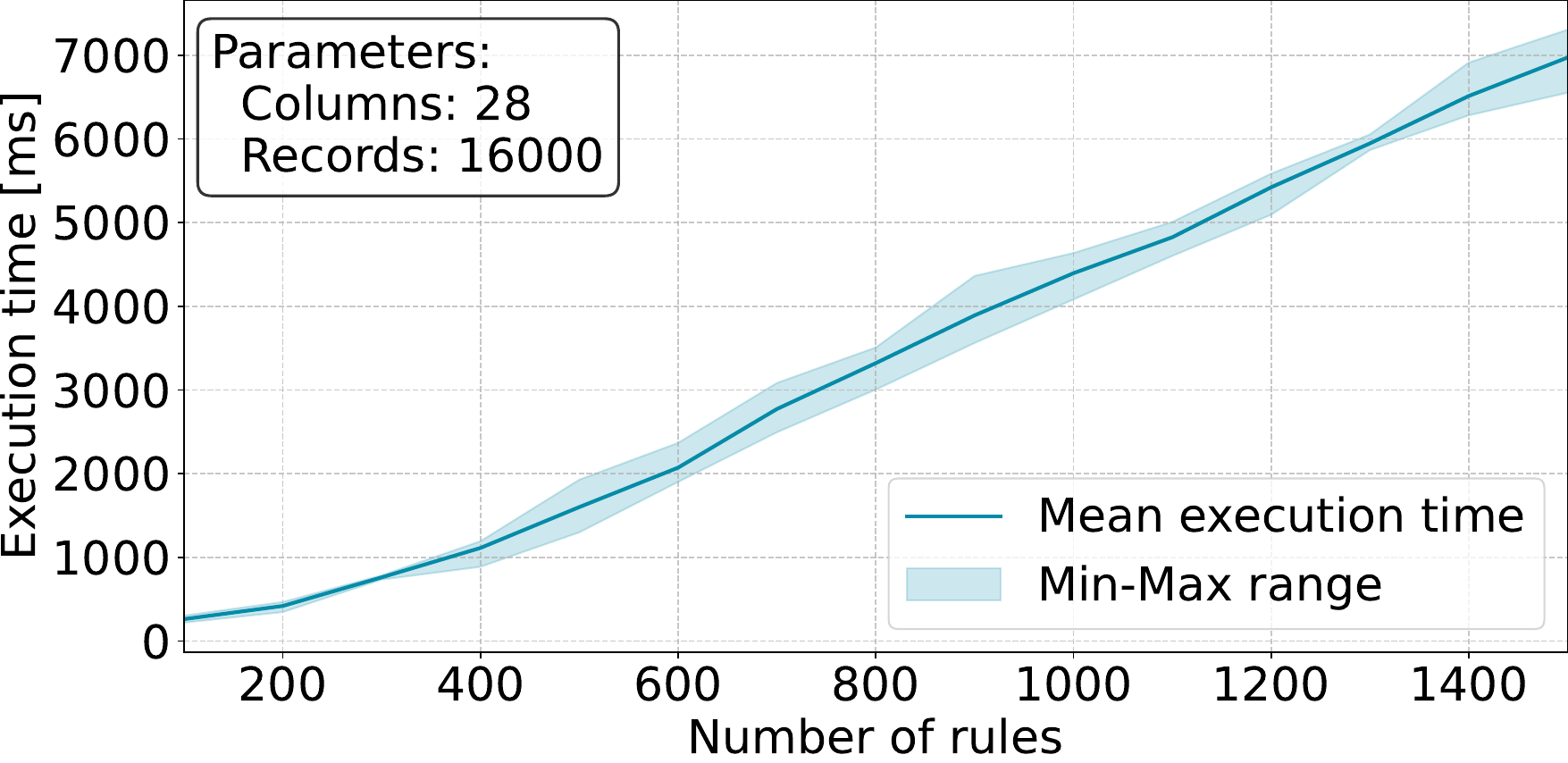}}{\caption{Runtime 
            vs.\ number of rules}\label{fig:tests:increasing-rules}}
			\ffigbox{\includegraphics[width=\columnwidth]{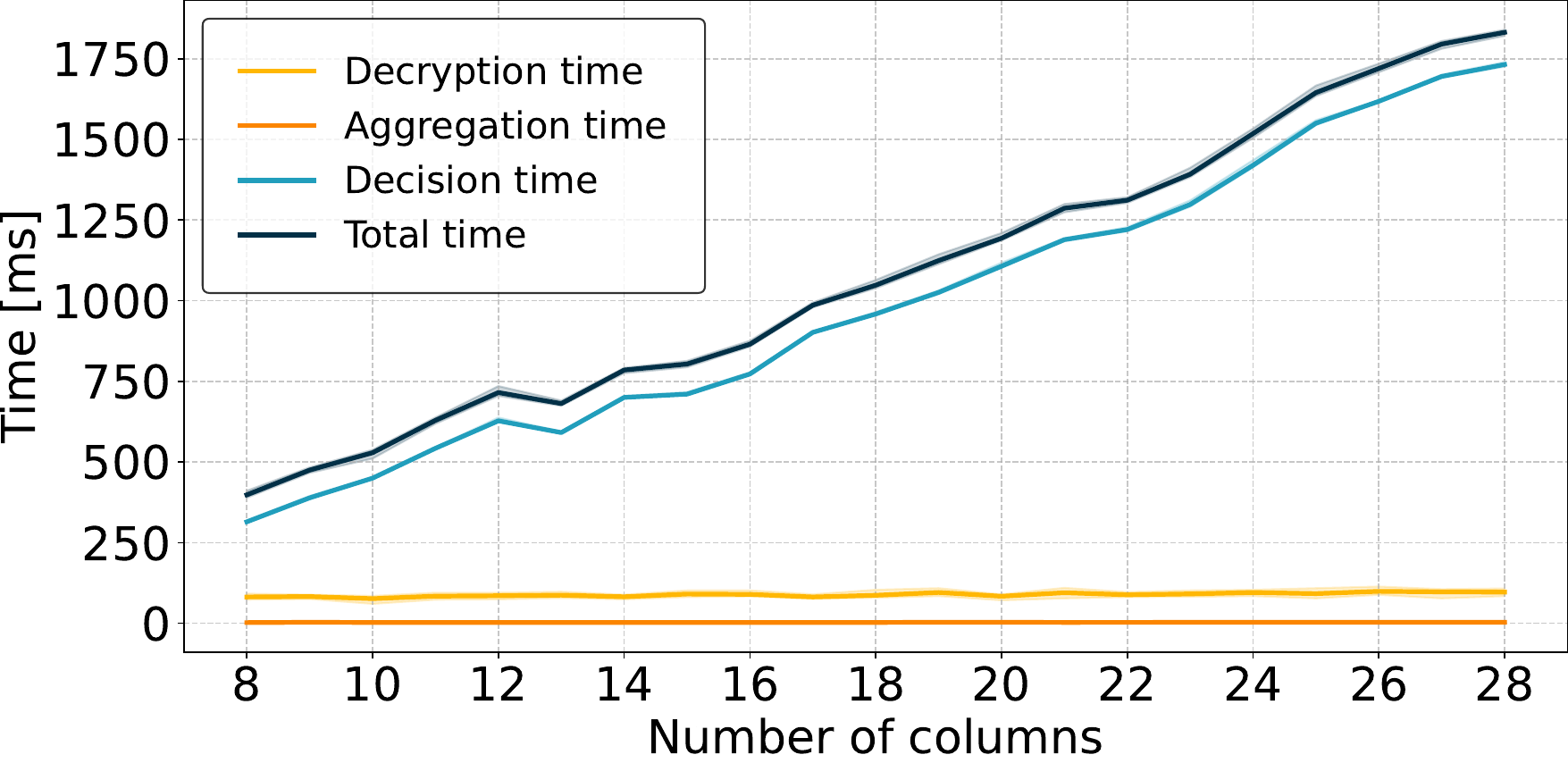}}{\caption{Runtime with prior aggregation}\label{fig:tests:aggregation}}
		\end{floatrow}
	\end{figure*}

\mypara{Memory usage}
\Cref{fig:tests:tee-ram-aggregation-comparison} reports enclave memory usage with and without the aggregation step. Each run begins with the \CCU's initialization and ends when the decision process terminates.
Both images show a reasonable memory footprint throughout the computation, with an expected peak during decryption.

    \begin{figure*}[tb]
		\subfloat[Without prior aggregation]{{\includegraphics[width=0.49\columnwidth]{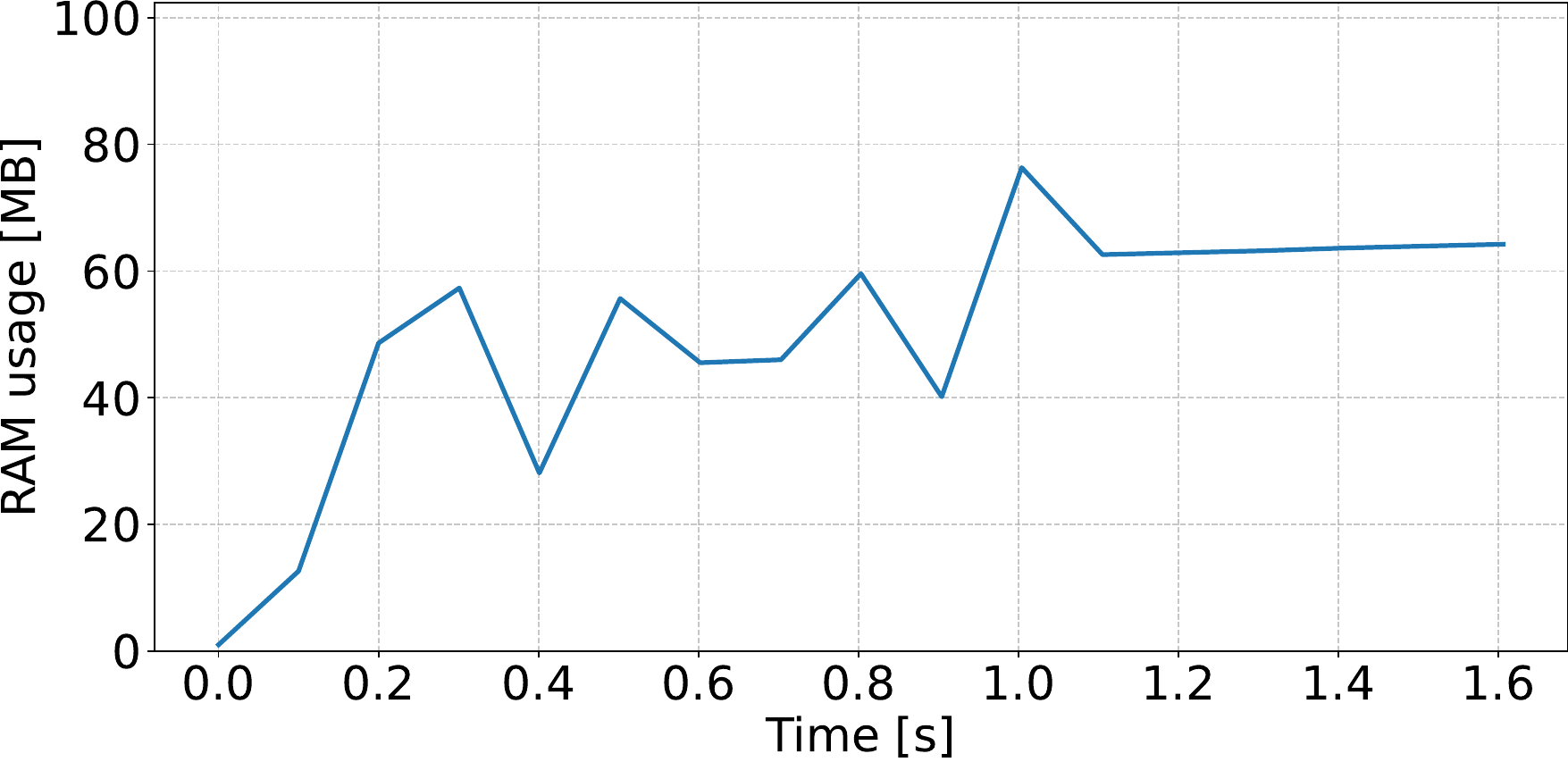} }
			\label{fig:tests:tee-ram-no-aggregation}}
		\subfloat[With prior aggregation]{{\includegraphics[width=0.49\columnwidth]{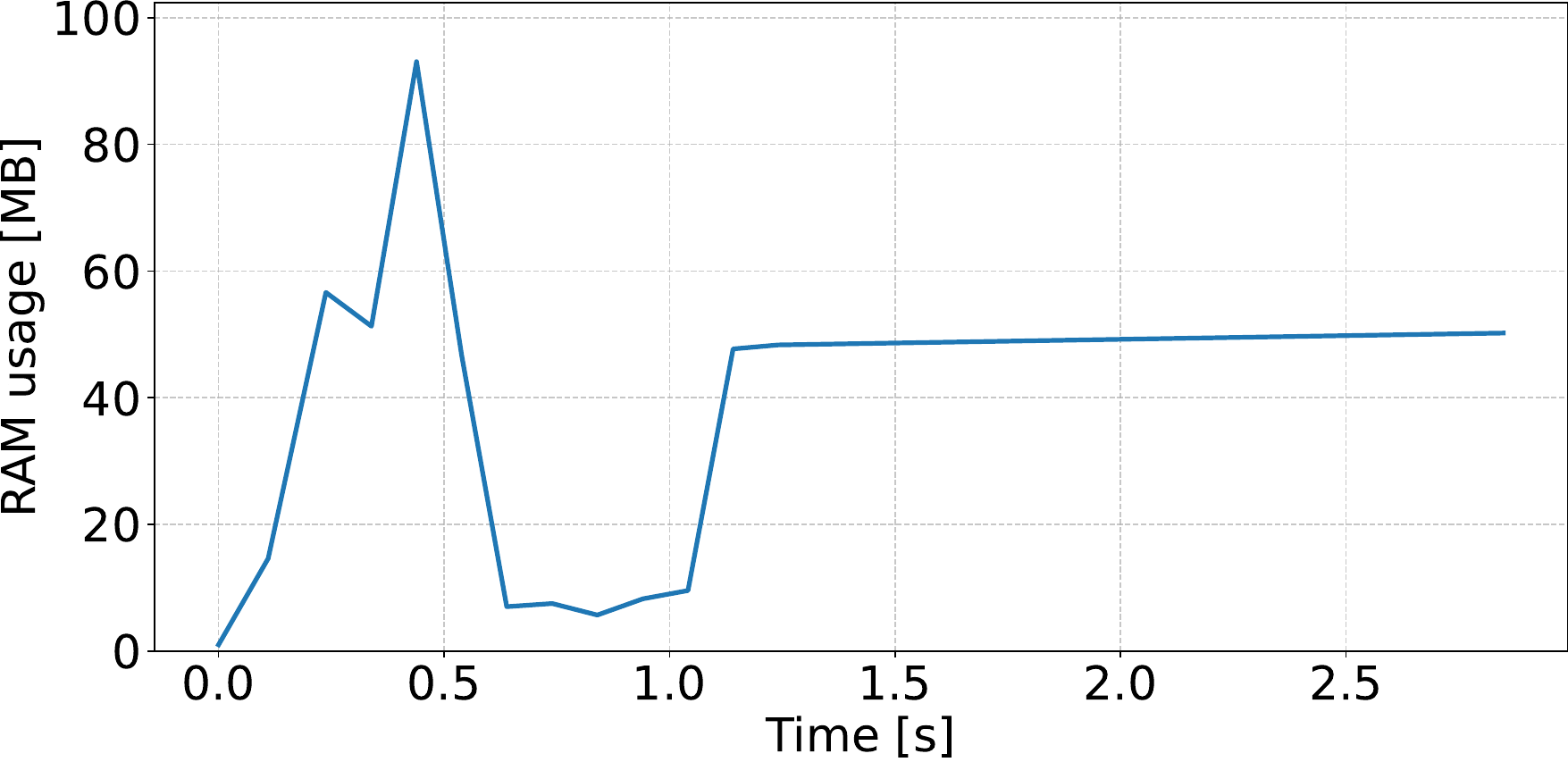} }
			\label{fig:tests:tee-ram-aggregation}}
		\caption{Memory usage over time} 
		\label{fig:tests:tee-ram-aggregation-comparison}
	\end{figure*}


The previous three experiments address \ref{qst:scala} by demonstrating linear scalability with respect to input size, number of columns, and rule complexity, and also bound memory usage within the \TEE.
Next, we address \ref{qst:overhead} by quantifying the overhead introduced by encryption and TEE execution.

%
\mypara{Encryption overhead}
We compare a light configuration in which all records share a single symmetric key with the actual heavy configuration, in which each record is encrypted with a different key (see \nWPii stage). Note that the light configuration is used only as a baseline; from a security perspective, compromising that key would expose the entire dataset, whereas in the heavy one, a key compromise affects at most one record. \Cref{fig:tests:encryption-comparison} shows that the heavy configuration adds a reasonable overhead of at most \qty{80}{\milli\second} across all tested configurations, while providing higher data confidentiality and resilience against key compromise.

\begin{figure*}[tb]
	\subfloat[Records for \num{28} fixed
    columns]{{\includegraphics[width=0.49\columnwidth]{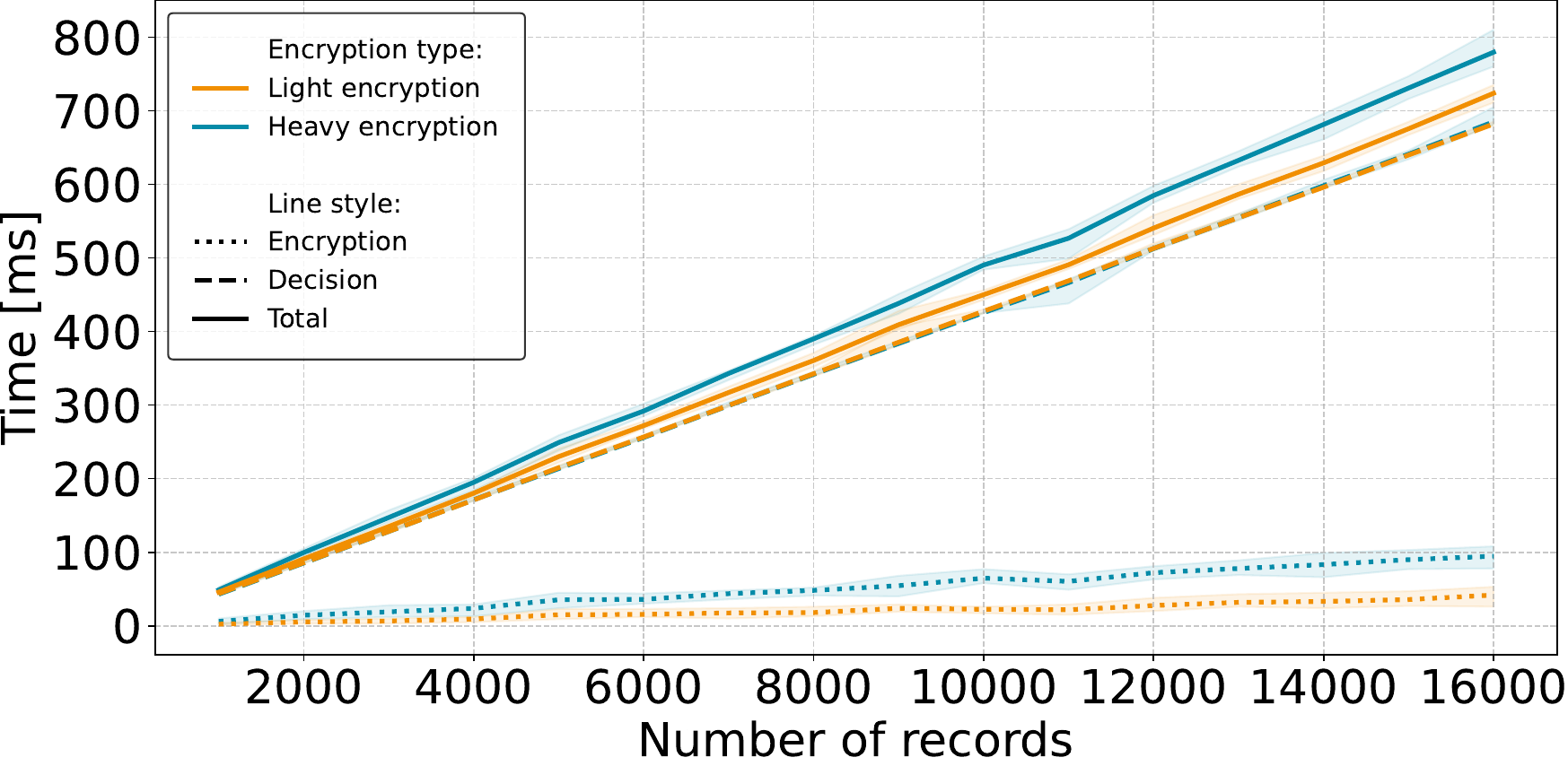} }
		\label{fig:tests:encryption-comparison-28-columns}}
	\subfloat[Columns for \qty{16}{k} fixed
    users]{{\includegraphics[width=0.49\columnwidth]{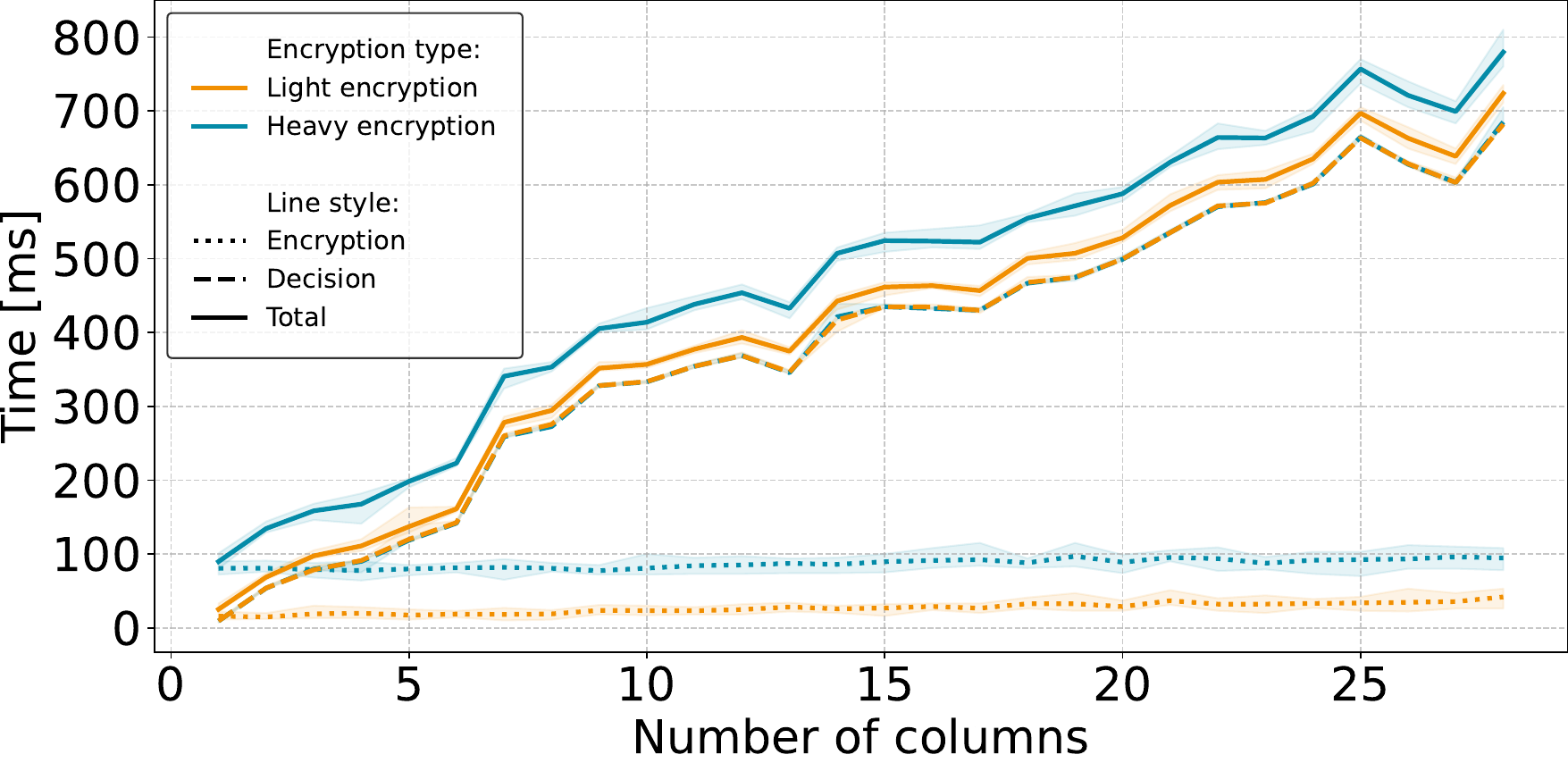} }  
		\label{fig:tests:encryption-comparison-16-users}}
	\caption{Comparison between light and heavy encryption}
	\label{fig:tests:encryption-comparison}
\end{figure*}

\mypara{TEE vs.\ server}
We compare \SPARTA's performance against an unprotected baseline that executes the same logic outside a \TEE (on the same machine) and without encryption.
\Cref{fig:tests:server-vs-tee} shows that the additional cost of fully protected execution is limited and mainly attributable to the decryption phase and enclave execution rather than to the decision logic itself. The gap between the two configurations remains stable as the number of records and columns increases, indicating that the overhead does not compound with input size.

\begin{figure*}[tb]
	\subfloat[Comparison for \num{28} columns]{{\includegraphics[width=0.47\columnwidth]{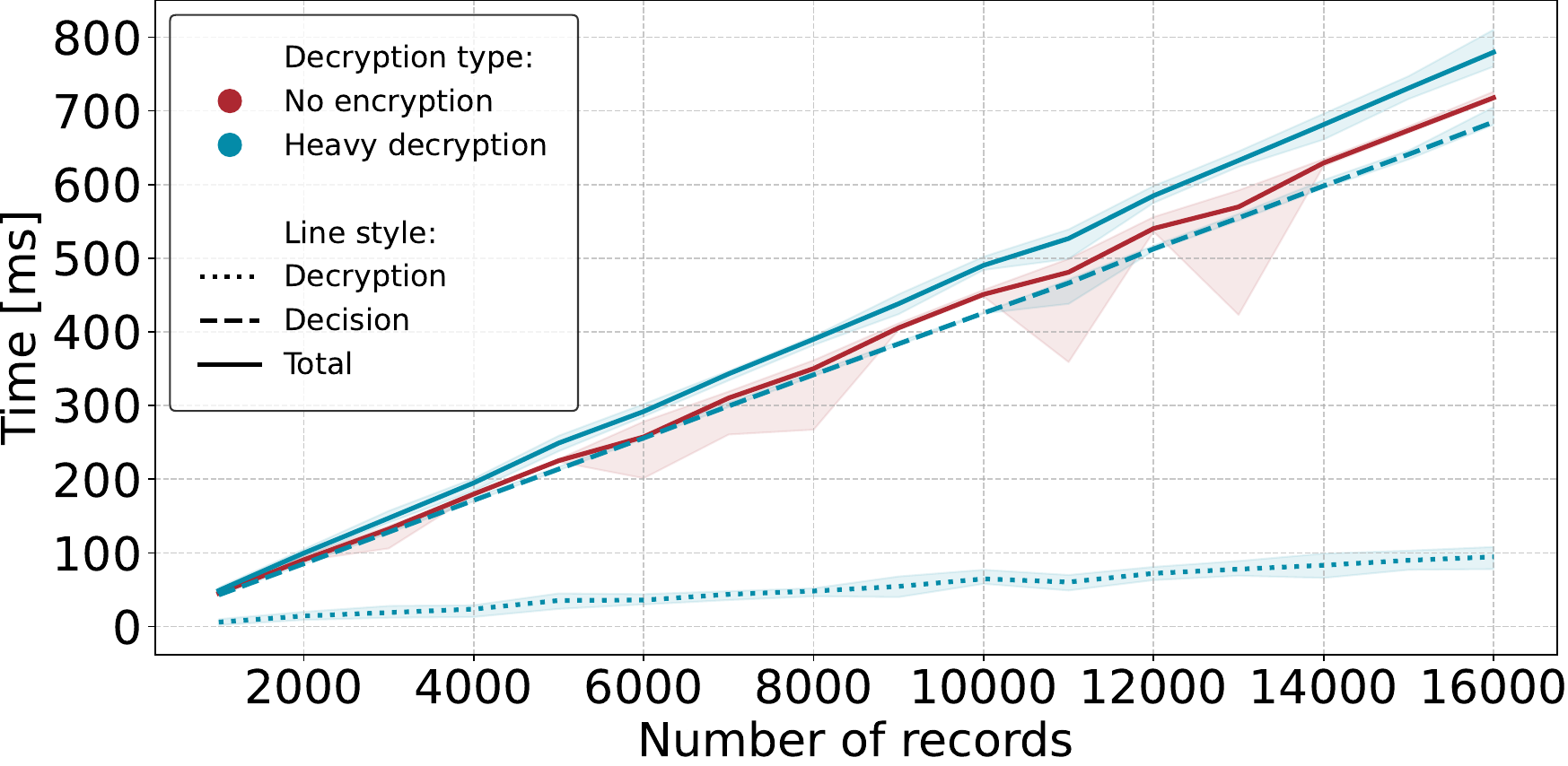} }  
		\label{fig:tests:server-vs-tee-columns}}
	\subfloat[Comparison for \qty{16}{k} records]{{\includegraphics[width=0.47\columnwidth]{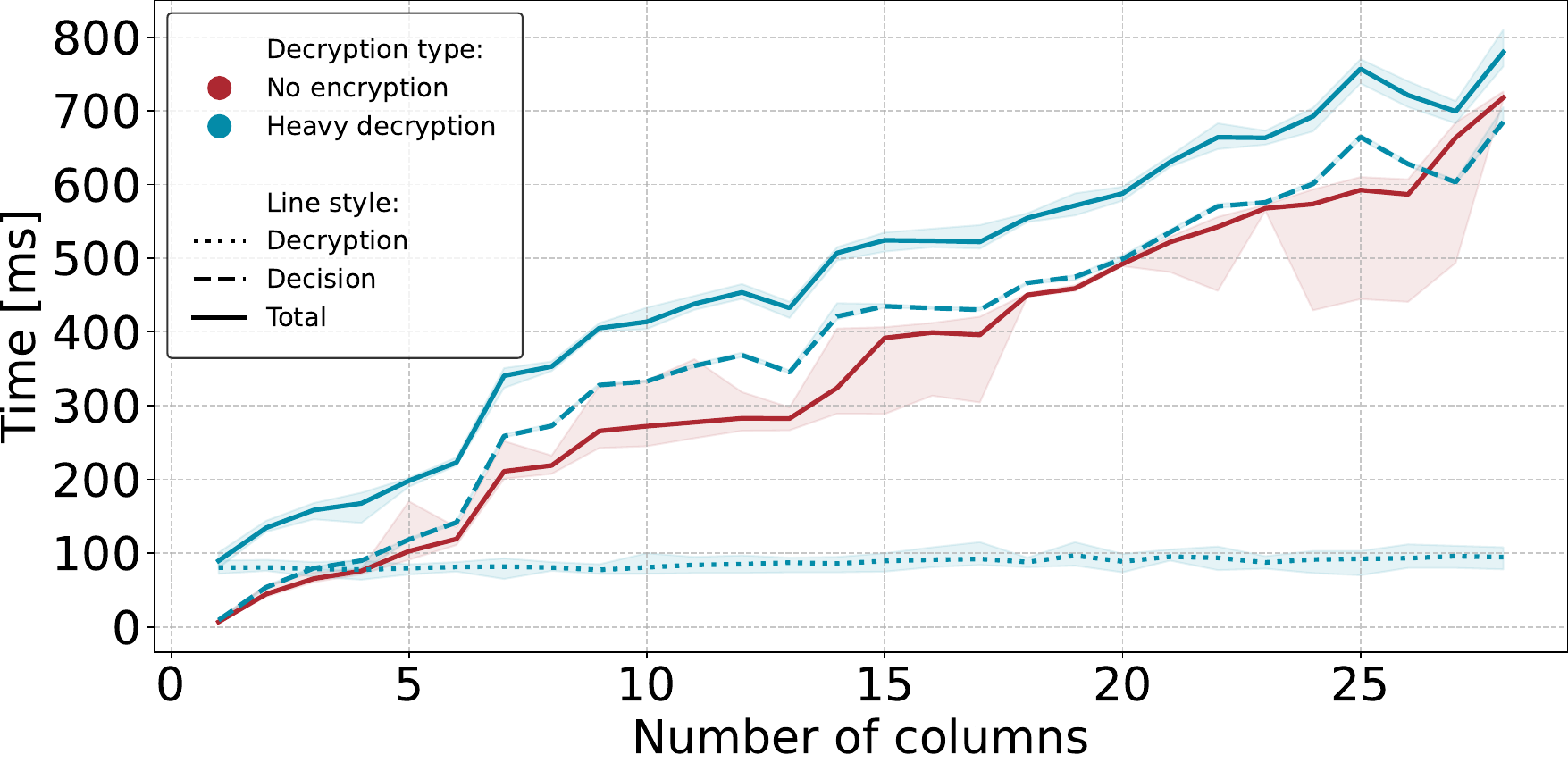} }
		\label{fig:tests:server-vs-tee-users}}
	\caption{Execution time comparison between server and \TEE}
	\label{fig:tests:server-vs-tee}
\end{figure*}


\mysubsubsection{\tCARD}
This benchmark contains \num{3} input columns and \num{13} rules with relational and interval-based conditions from the financial domain (to cater to \ref{qst:xdomain}), featuring more complex predicates than those in \tMED and \tVAX.
Since the original challenge does not provide input data, we generated inputs covering all rules and duplicated them up to \num{100000} records.
%
%
\Cref{fig:tests:barplot} shows the execution times across the three datasets.
\tMED requires the highest decision time due to its larger rule set. In contrast, the decryption time remains consistent across the datasets, confirming \SPARTA's ability to maintain stable performance across domains with varying logic complexity and input scale.

\mysubsubsection{\tVAX}
We built the dataset by generating realistic input examples for a nationwide vaccine campaign for patients, vaccination centers, and carriers, based on the decision logic in \gls{dmn} (see \cref{tab:dmn}). 
We use this synthetic scenario to evaluate the dual-storage strategy introduced in the workflow (see the \nWP stage in \cref{sec:approach:execution}) as an answer to \ref{qst:correct}.
Each patient provides full records, but only a subset of the fields is required for decision making: \num{6} out of \num{38} fields for patients, \num{8} out of \num{50} for carriers, and \num{4} out of \num{35} for vaccination centers.
To enhance privacy and efficiency, we store the lightweight version containing only the decision-relevant fields, alongside the original one for traceability. Notice that this dual-storage strategy allows the \DOwner to submit data without knowledge of the decision-critical fields, preventing inference attacks and further preserving logic confidentiality.
\Cref{fig:tests:re:memory-saving} displays a memory saving in the range of \SIrange{77.2}{83.8}{\percent}.


\begin{figure*}[tb]
    \begin{floatrow}
        \ffigbox{\includegraphics[width=.92\columnwidth]{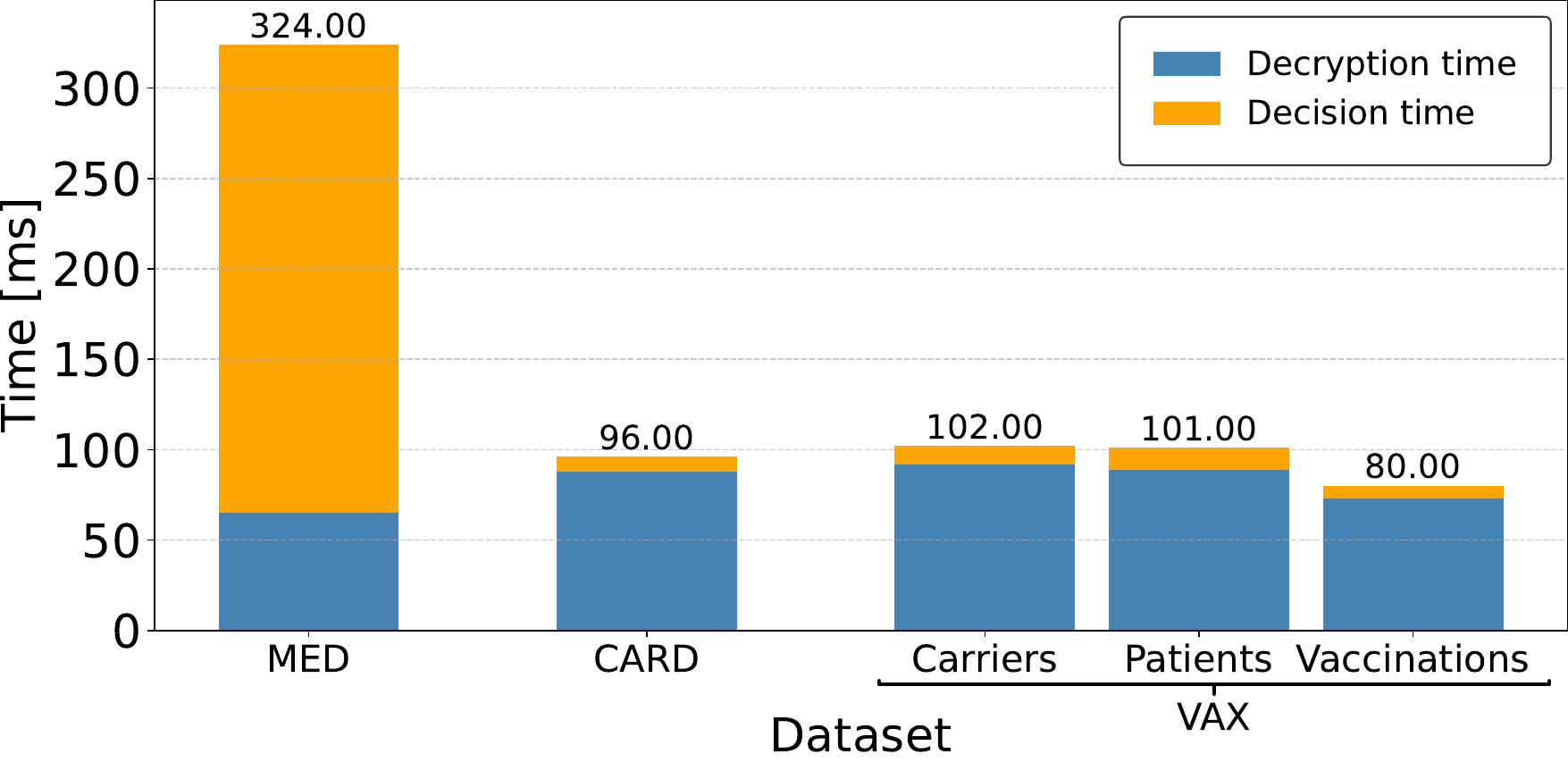}}{\caption{Execution time with \tVAX, \tMED, and \tCARD datasets}\label{fig:tests:barplot}}
        \ffigbox{\includegraphics[width=.92\columnwidth]{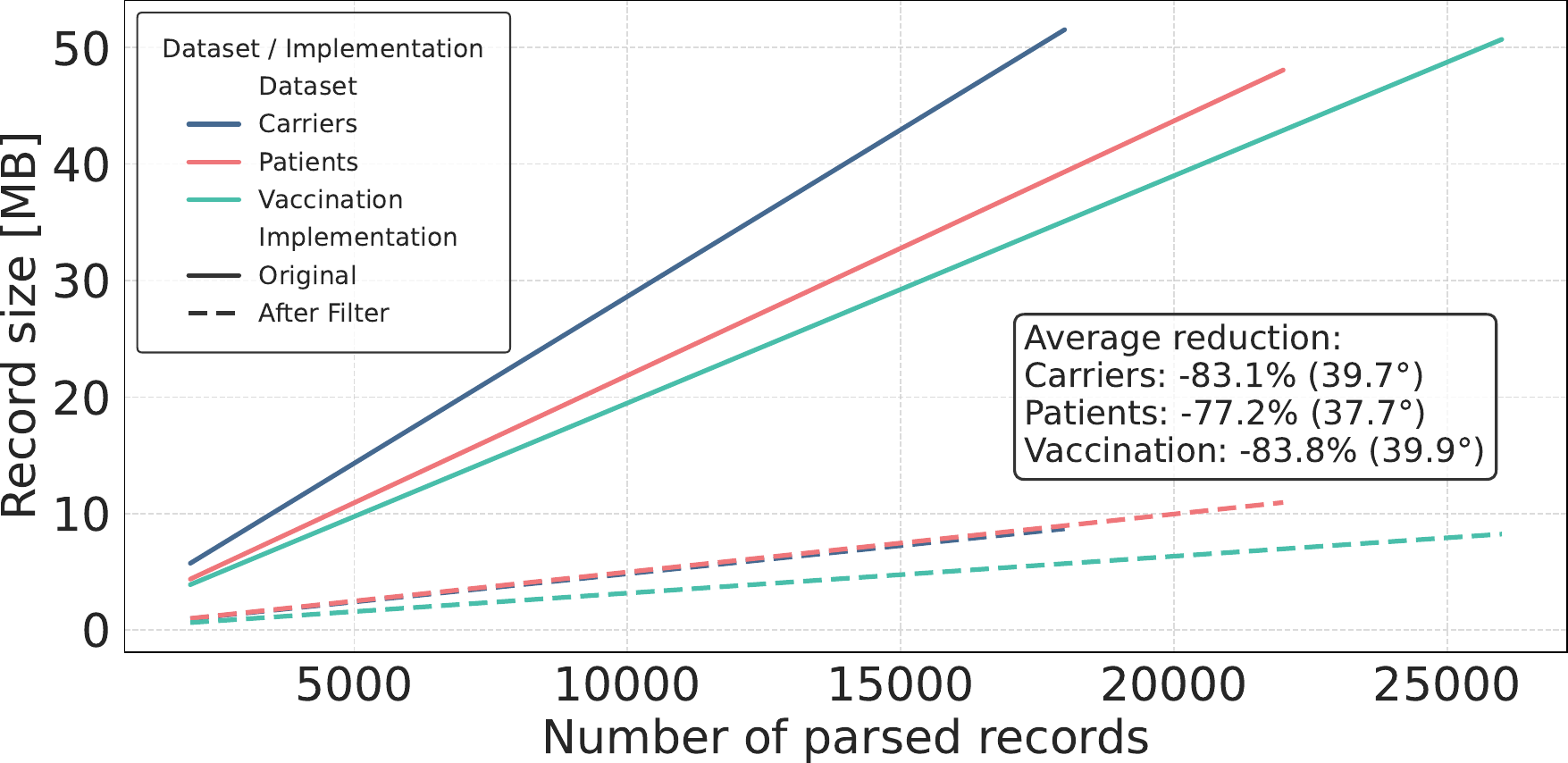}}{\caption[Memory savings]{Memory consumption with the dual-storage strategy}\label{fig:tests:re:memory-saving}}
    \end{floatrow}
\end{figure*}


Finally, to assess scalability beyond the original benchmark size, we conducted an additional stress test using the \tVAX dataset with the decision logic of \cref{tab:dmn} and up to \num{1000000} patient records as input. \SPARTA completed decision execution in approximately \qty{0.5}{\second}, confirming that the linear scaling trend extends well beyond the original dataset.


\section{Related Work}
\label{sec:sota}
In recent years, numerous approaches have been proposed in the field of decision-making support systems.
Haarmann et al.~\cite{haarmann2018dmn} propose executing DMN decision models on the Ethereum blockchain. However, blockchain-based approaches inherently rely on public ledgers, making them unsuitable for confidential data. In contrast, \SPARTA enables secure, private decision support while maintaining trust, transparency, and auditability, with access policies that ensure only authorized participants can interact with the system.
Zhang et al.~\cite{Zhang-privacydecisions} introduce a privacy-preserving decision-making system on a blockchain for secure voting, proposing a distributed batch key-generation protocol and a two-stage voting scheme. While this addresses privacy issues on blockchain, it is tailored to voting only, whereas \SPARTA supports user-definable decision logic.
Other solutions~\cite{Rahulamathavan} enable privacy-preserving clinical decisions using homomorphic encryption over Gaussian kernel-based SVMs. While effective for automated classification, these approaches incur significant performance overhead and fix the decision logic. In contrast, \SPARTA uses \AESGCM encryption, which introduces low overhead and allows users to specify custom decision logic.
Liu et al.~\cite{LIU2018825} present HPCS, a hybrid privacy-preserving clinical decision support system for health monitoring and disease prediction, in which a cloud server runs neural network predictions using the Paillier Cryptosystem with Threshold Decryption. Compared to \SPARTA, HPCS incurs higher overhead and limits decision logic to prediction-based decisions, whereas we support user-defined logic.
Xie et al.~\cite{TEBDS} present TEBDS, a TEE-based blockchain data-sharing system for IoT that relies on a centralized key management center (KMC). In contrast, \SPARTA derives all keys within the TEE from a shared seed, 
removing this single point of failure. Moreover, \SPARTA uses a public blockchain only to store IPNS names pointing to IPFS content-addressed data, delegating all operations to the TEE. 

\section{Conclusion, limitations and future work}
\label{sec:conclusion}
We presented \SPARTA, an approach that enables automated decision-support over confidential data in multi-party settings while preserving data secrecy. \SPARTA combines \TEE-based execution with per-record key derivation, distributed storage, and blockchain notarization to solve the confidential verifiable decision-support problem, enabling multi-party decision-making that is confidential, verifiable, customizable, and practical. 
Our experiments on real-world and synthetic benchmarks confirm linear scalability and an overhead of less than \qty{80}{\milli\second} compared to unprotected execution.

Our solution presents some limitations. \SPARTA supports \CCU replication but does not yet implement load balancing, which would require non-trivial changes to session management. Our prototype targets Intel SGX; porting to other \TEE platforms such as AMD SEV, Arm CCA, or Intel TDX is a natural next step. Our threat model also excludes side-channel attacks~\cite{Cauldron}, though existing mitigation techniques~\cite{Obelix} and deployment within protected environments can reduce this risk. Our implementation covers only standard aggregation functions (\SmallCode{avg}, \SmallCode{sum}, \SmallCode{max}, \SmallCode{min}) and single-operation filters over data. 
Integrating \TEE-based smart contracts~\cite{Ekiden} could further broaden the range of automated decision logic that the system supports.
Finally, although our experiments with public benchmarks show good performance and scalability with low overhead, on-field use of \SPARTA in real-world settings is paramount for validation.

\bibliographystyle{splncs04}
\bibliography{bibliography}
\appendix
\section{Formal Analysis}\label{app:formal}
We first introduce the required notation and cryptographic tools.
\begin{asparadesc}
	\item[Notation.] Let $\NN$ be the set of all natural numbers. We denote the security parameter by $\lambda \in \NN$.
	Every algorithm takes the security parameter $\lambda$ as input~(in unary, \ie $1^\lambda$). When an algorithm has multiple inputs, $1^\lambda$ is typically omitted.
	A function is called negligible if, for every positive polynomial and all sufficiently large security parameter $\lambda$, the function grows at most as the inverse of the polynomial in $\lambda$.
	
	\item[Random Oracle.] In the Random Oracle Model (ROM), all the parties have access to a truly random hash function. In particular, when a value $v$ is given as an input from a party to the Random Oracle~(RO), the latter samples a random answer $r$, stores the pair~$(v,r)$, and outputs $r$ to the party. If the RO is queried on the same value $v$ multiple times, the same answer $r$ is output.
	It is heuristically shown that cryptographic hash functions such as \SHA-256 behave as a RO.
	
	\item[Authenticated Encryption.] A Symmetric Key Encryption scheme with key space $\keyspace$ is composed of the following algorithms:
	\begin{iiilist}
		\item $\Gen(1^\lambda)$: The randomized key generation algorithm outputs a key $k$ in the key space $\keyspace$.
		\item $\Enc(k,m)$: The randomized encryption algorithm, on input the key $k$ and a message $m$, outputs a ciphertext $c$.
		\item $\Dec(k,c)$: The decryption algorithm, on input the key $k$ and a ciphertext $c$, outputs a message $m$, or $\bot$ if decryption fails.%
	\end{iiilist}
	An authenticated encryption scheme should satisfy the following properties:
	\begin{iiilist}
		\item \textbf{Correctness:} For each message $m$ and each $k\in\keyspace$, $\Pr[\Dec(k,\Enc(k,m))=m]=1$.
		\item \textbf{Unforgeability:} The probability that a PPT adversary $\adv$ wins the following experiment~(\ie outputs 1) is negligible:
		\begin{inparaenum}[\itshape(a)\upshape]
			\item Obtain $k$ from $\Gen(1^\lambda)$.
			\item The adversary, after making at most polynomially many queries to the encryption oracle $\Enc(k,\cdot)$, outputs a ciphertext $c$.
			\item Let $m=\Dec(k,c)$. Output 1 if $m\ne\bot$ and $c$ was not queried to the encryption oracle by $\adv$. Output 0 otherwise.
		\end{inparaenum}
		\item \textbf{CCA-Security:} The probability that a PPT adversary $\adv$ wins the following experiment~(\ie outputs 1) is negligibly close to $1/2$:
		\begin{inparaenum}[\itshape(a)\upshape]
			\item Obtain $k$ from $\Gen(1^\lambda)$.
			\item The adversary, with oracle access to $\Enc(k,\cdot)$ and $\Dec(k,\cdot)$, outputs messages $m_0,m_1$.
			\item Choose a uniform bit $b\in\{0,1\}$, compute ciphertext $c=\Enc(k,m_b)$, and send $c$ to $\adv$.
			\item The adversary, with oracle access to $\Enc(k,\cdot)$ and restricted oracle access to $\Dec(k,\cdot)$ (excluding $c$), outputs a bit $b'$.
			\item The experiment outputs $1$ if $b'=b$, else $0$.
		\end{inparaenum}
	\end{iiilist} 
\end{asparadesc}


\bigskip\noindent
We can now state the formal properties we require from our protocol.

\begin{asparadesc}
    \item[Correctness.] For any function $f$ and any set of messages $M$ such that $f(M)=y$, if $M$ is provided to the~\CCU for function $f$ by the Data Provider obtaining ciphertexts $C$, and subsequently, ciphertexts $C$ are provided by the Decider back to the~\CCU obtaining output $y'$, then $y'=f(M)=y$.
    
    \item[Data Privacy.] A PPT adversary $\adv$ wins the following experiment with probability negligibly close to $1/2$:
	\begin{iiilist}
	    \item $\adv$ chooses a function $f$. $\adv$ has oracle access to the~\CCU acting as Data Provider (sending arbitrary sets of messages $M$ and receiving corresponding ciphertexts $C$) and oracle access to the~\CCU acting as Decider (sending arbitrary sets of ciphertexts $C$ and receiving back $f(M)$ if $M$ correctly decrypts from $C$). Then, $\adv$ chooses two sets of messages $M_0,M_1$ such that $f(M'_0)=f(M'_1)$ where $M'_b$ can be any combination of messages in $M_b$.
	    \item Choose a random bit $b$, produce ciphertexts $C_b$ of messages $M_b$, and send $C_b$ to $\adv$.
	    \item $\adv$ has oracle access to the~\CCU as a Data Provider to send arbitrary sets of messages $M$, and obtain the corresponding ciphertexts $C$ back. It has further oracle access to the~\CCU as a Decider by sending arbitrary ciphertexts $C$ and obtaining back $f(M)$ if $M$ correctly decrypts $C$. Finally, outputs a bit $b'$. 
	    \item If $b'=b$, output 1; otherwise, output 0.
	\end{iiilist}
	
	\item[Data Integrity.] A PPT adversary $\adv$ wins the following experiment with negligible probability:
	\begin{iiilist}
	    \item $\adv$, acting as Data Provider, has oracle access to the~\CCU submitting arbitrary message sets $M$ and receiving ciphertexts $C$. $\adv$ outputs ciphertexts $C^*$.
	    \item Decrypting ciphertexts $C^*$ yields messages $M^*$. Output 1 if no message in $M^*$ equals $\bot$ and no subset of $M^*$ was previously queried; else, output 0.
	\end{iiilist}
	
	\item[Forward/Backward Secrecy/Integrity.] If all ciphertext keys at time $t$ are leaked, ciphertexts encrypted at any different time $t'$ remain secure, and their underlying messages cannot be decrypted or forged.
\end{asparadesc}


\subsubsection{The protocol.} Let $(\Gen,\Enc,\Dec)$ be an authenticated encryption scheme with key space $\keyspace$ and $H: \{0,1\}^* \rightarrow \keyspace$ be a cryptographic hash function modeled as a RO. 
The protocol is as follows:
\begin{inparaenum}[(\bfseries1)]
    \item \textbf{Setup:} The~\CCU is initialized with functions $\{f_j\}$ and a uniform seed $s\in\{0,1\}^\lambda$.
    \item \textbf{Data Storage:}
    \begin{iiilist}
        \item The Data Provider establishes a secure channel with the~\CCU, chooses message set $M$, and sends it to the~\CCU.
        \item The~\CCU generates a unique ID $T_i$, key $k_i=H(s,T_i)$, ciphertexts $c_i=\Enc(k_i,m_i)$ for $m_i\in M$, and stores ciphertexts publicly.
    \end{iiilist}
    \item \textbf{Decision:}
    \begin{iiilist}
        \item Decider establishes a secure channel with the~\CCU and sends ciphertext set $C$ and function index $j$.
        \item~\CCU takes the unique ID $T_i$, keys $k_i=H(s,T_i)$, decrypts ciphertexts to messages $m_i$, and outputs decision $f_j(\{m_i\})$.
    \end{iiilist}
\end{inparaenum}

\begin{sloppypar}
\begin{theorem}
    Assuming that $(\Gen,\Enc,\Dec)$ is an Authenticated Encryption scheme, $H$ is modeled as a Random Oracle and the $\CCU$ is a secure TEE, the protocol described above satisfies Correctness, Data Privacy, Data Integrity, and Forward/Backward Secrecy and Integrity.
\end{theorem}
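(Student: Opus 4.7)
The plan is to dispatch the four properties separately, with the bulk of the work going into Data Privacy, which reduces via a hybrid argument to the CCA-security of the authenticated encryption scheme together with the Random Oracle Model.

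\textbf{Correctness} follows directly from the correctness of $(\Gen,\Enc,\Dec)$ together with the determinism of $H$: since the~\CCU stores the unique ID $T_i$ alongside each ciphertext $c_i$, recomputing $k_i = H(s, T_i)$ yields the same key used during encryption, so $\Dec(k_i,c_i)=m_i$ with probability 1, and therefore $f_j(\{m_i\})=f_j(M)$.

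\textbf{Data Privacy} is the central step. I would proceed via a sequence of hybrid games. In the first hybrid, I replace each random oracle output $H(s,T_i)$ by a fresh uniform key $k_i \gets \Gen(1^\lambda)$. Because $s$ is chosen uniformly in $\{0,1\}^\lambda$ and is hardwired inside the \CCU so never leaked, the only way the adversary could distinguish is by guessing the prefix $s$ of some RO query, which happens with probability at most $q/2^\lambda$ for $q$ the polynomial number of RO queries. In subsequent hybrids, I progressively swap the encryption of each $m\in M_0$ with the encryption of the corresponding $m\in M_1$, reducing each step to (multi-ciphertext) CCA-security of the AE scheme. The Decider oracle, which takes ciphertexts $C$ and outputs $f_j(\{m_i\})$, is simulated by using the CCA decryption oracle on the non-challenge ciphertexts and honestly evaluating $f_j$. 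The hypothesis that $f(M'_0)=f(M'_1)$ on every subset $M'_b$ of $M_b$ is exactly what is needed so that the Decider's replies are identical across hybrids even when the adversary submits mixtures of challenge and non-challenge ciphertexts; without it the Decider could trivially distinguish.

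\textbf{Data Integrity} reduces to the INT-CTXT property of the AE scheme, again combined with the RO step that replaces each $H(s,T_i)$ with a fresh uniform key. An adversary that outputs $C^*$ decrypting to valid messages $M^*$ not previously queried must have produced, for some index $i$, a ciphertext $c^*_i\ne c_i$ (or a ciphertext under a fresh ID $T^*$) that decrypts successfully under the corresponding key $k_i$. Since the $k_i$ are independent uniform values never revealed to $\adv$, such a forgery can be embedded into an INT-CTXT challenge by guessing the forged index upfront, incurring only a polynomial loss. \textbf{Forward/Backward Secrecy and Integrity} is then essentially a corollary of the RO-based key derivation: because each $k_i = H(s,T_i)$ with freshly sampled $T_i$, the joint distribution of keys at time $t$ and time $t'\ne t$ is statistically indistinguishable from independent uniform draws from $\keyspace$, so leaking keys at $t$ reveals nothing about keys at $t'$ and the Data Privacy and Data Integrity arguments go through unchanged on the remaining ciphertexts.

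The main obstacle I anticipate is the Data Privacy hybrid: faithfully simulating the Decider oracle requires answering queries on ciphertext tuples that may span both challenge and non-challenge ciphertexts, and a naive reduction to single-ciphertext CCA does not go through. I would therefore either invoke the standard multi-ciphertext CCA variant (which follows from the single-ciphertext one by a generic hybrid with a loss linear in $|M|$) or carefully partition the hybrids so that at each step only one ciphertext is the CCA challenge while the rest are simulated using $\adv$'s own view; the aggregation constraint $f(M'_0)=f(M'_1)$ is what makes either approach go through without the Decider oracle leaking the hidden bit.
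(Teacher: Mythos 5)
Your proposal is correct and follows essentially the same route as the paper: correctness from AE correctness, data privacy via the secrecy of the seed $s$ plus the Random Oracle making the derived keys $k_i=H(s,T_i)$ look uniform and then reducing to CCA-security, data integrity from unforgeability, and forward/backward secrecy from the one-wayness of the RO-based key derivation. The paper's own proof is only a brief sketch of these four reductions, so your hybrid-argument details (the $q/2^\lambda$ seed-guessing bound, the multi-ciphertext CCA step, and the explicit role of the $f(M'_0)=f(M'_1)$ constraint in simulating the Decider oracle) go beyond what the paper writes down but are consistent with it.
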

\end{sloppypar}
\begin{proof}
    Recall that the~\CCU is modeled as a secure TEE. Hence, the fixed functions $\{f_j\}$ cannot be modified, internal states remain secret, and outputs are always correctly computed with respect to these functions.
    Correctness is ensured by the correctness property of the authenticated encryption scheme $(\Gen,\Enc,\Dec)$ and by the fact that the~\CCU's functions are fixed and unmodifiable.
    Data Privacy is guaranteed since the internal~\CCU state (i.e., the seed $s$) remains secret and the derived encryption keys $k_i=H(s,T_i)$ appear indistinguishable from random due to the RO assumption. Thus, Data Privacy reduces directly to the CCA-security of $(\Gen,\Enc,\Dec)$.
    Data Integrity follows from the Unforgeability property of the authenticated encryption scheme $(\Gen,\Enc,\Dec)$, combined with the trusted execution guarantees provided by the \CCU.
    Forward/Backward Secrecy and Integrity results directly from the RO-based key derivation. Given a leaked key $k=H(s,T)$, the one-way property of the RO prevents recovery of the seed $s$. Therefore, keys corresponding to ciphertexts encrypted at other timestamps $T'\ne T$ remain computationally infeasible to derive.
\end{proof}
\end{document}